\newtheorem{theorem}{Theorem}
\newtheorem{lemma}{Lemma}
\newtheorem{observation}{Observation}
\newtheorem{corollary}{Corollary}
\newtheorem{proposition}{Proposition}
\theoremstyle{definition}
\newcommand{\threefield}[3]{$#1\mid#2\mid#3$\xspace}
\newcommand{\Prob}{\threefield{1}{r_j}{\sum w_j U_j}}
\newcommand{\np}{\ensuremath{p_\#}}
\newcommand{\nw}{\ensuremath{w_\#}}
\newcommand{\nd}{\ensuremath{d_\#}}
\newcommand{\nr}{\ensuremath{r_\#}}
\newcommand{\optprob}[3]{
	\begin{center}
		\begin{minipage}{0.96\linewidth}
			\noindent
			\textsc{#1}
			\begin{description}
			 \item[\textbf{Input:}]  #2
			 \item[\textbf{Task:}]  #3
			\end{description}
		\end{minipage}
	\end{center}
}
\newlist{variables}{enumerate}{1}
\setlist[variables,1]{label=\text{\arabic*}, ref  =\text{\arabic*}}
\crefname{variablesi}{}{}
\newlist{constraints}{enumerate}{1}
\setlist[constraints,1]{label=\text{\arabic*}, ref  =\text{\arabic*}}
\crefname{constraintsi}{constraint}{constraints}
\newcommand{\appref}[1]{{\hyperref[proof:#1]{\appsymb}}}
\newcommand{\appLink}[1]{{\hyperref[#1]{\appsymb}}}
\title{Single-Machine Scheduling to Minimize the Number of Tardy Jobs with Release Dates} 
\author{Matthias Kaul\thanks{Universit{\"a}t Bonn. Most work was done at Hamburg University of Technology or at home. \texttt{mkaul@uni-bonn.de}}
  \and Matthias Mnich\thanks{Hamburg University of Technology, Institute for Algorithms and Complexity, Hamburg, Germany. \texttt{matthias.mnich@tuhh.de}. Partially supported by DFG project MN 59/4-1.}
  \and Hendrik Molter\thanks{Department of Computer Science, Ben-Gurion University of the Negev, Be'er-Sheva, Israel. \texttt{molterh@post.bgu.ac.il} Supported by the European Union's Horizon Europe research and innovation program under grant agreement 949707.}
}
\date{}
\begin{document}

\maketitle

\begin{abstract}
We study the fundamental scheduling problem \Prob: schedule a set of $n$ jobs with weights, processing times, release dates, and due dates on a single machine, such that each job starts after its release date and we maximize the weighted number of jobs that complete execution before their due date.
Problem \Prob generalizes both {\sc Knapsack} and {\sc Partition}, and the simplified setting without release dates was studied by Hermelin et al.\ [Annals of Operations Research, 2021] from a parameterized complexity viewpoint.

Our main contribution is a thorough complexity analysis of \Prob in terms of four key problem parameters: the number $\np$ of processing times, the number~$\nw$ of weights, the number $\nd$ of due dates, and the number $\nr$ of release dates of the jobs.
\Prob is known to be weakly para-$\mathsf{NP}$-hard even if~$\nw+\nd+\nr$ is constant, and Heeger and Hermelin [ESA, 2024] recently showed  (weak) $\mathsf{W}[1]$-hardness parameterized by~$\np$ or~$\nw$ even if $\nr$ is constant.

Algorithmically, we show that \Prob is fixed-parameter tractable parameterized by $\np$ combined with any two of the remaining three parameters $\nw$, $\nd$, and $\nr$.
We further provide pseudo-polynomial $\mathsf{XP}$-time algorithms for parameter $\nr$ and $\nd$.
To complement these algorithms, we show that \Prob is (strongly) $\mathsf{W}[1]$-hard when parameterized by $\nd+\nr$ even if $\nw$ is constant.
Our results provide a nearly complete picture of the complexity of \Prob for $\np$, $\nw$, $\nd$, and $\nr$ as parameters, and extend those of Hermelin et al.\ [Annals of Operations Research, 2021] for the problem $1\mid\mid\sum w_j U_j$ without release dates.
\end{abstract}

\section{Introduction}
\label{sec:introduction}
The problem of scheduling jobs to machines is one of the core application areas of combinatorial optimization~\cite{pinedo2012scheduling}. 
Typically, the task is to allocate jobs to machines in order to maximize a certain objective function while complying with certain constraints.
In our setting, the jobs are characterized by several numerical parameters: a \emph{processing time}, a \emph{release date}, a \emph{due date}, and a \emph{weight}.
We have access to a single machine that can process one job (non-preemptively) at a time.
We consider one of the most fundamental objective functions, namely to minimize the weighted number of tardy jobs, where a job is considered \emph{tardy} if it completes after its due date.
In the standard three-field notation for scheduling problems by Graham~\cite{Graham1969}, the problem is called \Prob.
We give a formal definition in \cref{sec:preliminaries}.

The interest in \Prob comes from various sources.
It generalizes several fundamental combinatorial problems.
In the most simple setting, without weights and release times, the classic algorithm by Moore~\cite{Moore68} computes an optimal schedule in polynomial time.
When weights are added, the problem ($1\mid\mid\sum w_jU_j$) encapsulates the {\sc Knapsack} problem, a cornerstone in combinatorial optimization and one of Karp's 21 $\mathsf{NP}$-complete problems~\cite{Karp72}.
Precisely, when all jobs are released at time zero and all jobs have a common due date, we obtain the {\sc Knapsack} problem.
Karp’s $\mathsf{NP}$-hardness proof (from his seminal paper~\cite{Karp72}) is the first example of a reduction to a problem involving numbers.
The problem $1\mid\mid\sum w_jU_j$ is one of the most extensively studied problems in scheduling and can be solved in pseudopolynomial time by the classic algorithm by Lawler and Moore~\cite{LawlerMoore}.
Hermelin et al.~\cite{hermelin2023minimizing} showed that this algorithm can be improved in various restricted settings.
Better running times have been achieved for the special case where the weights of the jobs equal their processing times~\cite{bringmann2022faster,FischerW2024,klein2023minimizing,schieber2023quick}.

The problem $1\mid\mid\sum w_jU_j$ (so without release times) has been studied from the perspective of parameterized complexity by Hermelin et al.~\cite{HermelinKPS21}.
They considered the number $\np$ of different processing times, the number $\nd$ of different due dates, and the number $\nw$ of different weights as parameters and showed fixed-parameter tractability for $\nw+\np$, $\nw+\nd$, and $\np+\nd$ as well as giving an $\mathsf{XP}$-algorithm for the parameters $\np$ and $\nw$.
These results are presumably tight, as Heeger and Hermelin~\cite{HeegerH24} recently showed (weak) $\mathsf{W}[1]$-hardness for the parameters~$\np$ and $\nw$.
The problem has also been studied under fairness aspects~\cite{heeger2023equitable}.

The addition of release dates is naturally motivated in every scenario where not all jobs are initially available.
The aim of this paper is to study the parameterized complexity of the problem (\Prob) in this setting.
Here, it encapsulates {\sc Partition} and becomes weakly $\mathsf{NP}$-hard~\cite{LenstraRinnooy-Kan77}, even if there are only two different release dates and two different due dates and all jobs have the same weight.
It has previously been studied for the case of uniform processing times, both on a single machine~\cite{GareyUnitJobs81} and for parallel machines~\cite{baptiste2004ten,HM24}, as well as for the special case of interval scheduling~\cite{arkin1987scheduling,hermelin2024parameterized,krumke2011interval,sung2005maximizing}.

\paragraph{Our Contributions.}
In this paper, we deploy the tools of parameterized complexity to study the computational complexity of \Prob.
In the spirit of ``parameterizing by the number of numbers''~\cite{Fellows12}, we analyze the complexity picture with respect to (i) the number~$\np$ of distinct processing times of the jobs, (ii) the number~$\nw$ of distinct weights of the jobs, (iii) the number $\nd$ of distinct due dates of the jobs, and (iv) the number $\nr$ of distinct release dates of the jobs.
Thereby, we extend and complement the results obtained by Hermelin et al.~\cite{HermelinKPS21} for the case where all release dates are zero.

In summary, we obtain an almost complete classification into tractable cases (meaning that we find a fixed-parameter algorithm) and intractable cases (meaning that we show the problem to be $\mathsf{W}[1]$-hard) depending on which subset of parameters from $\{\np, \nw, \nd, \nr\}$ we consider.

First note that for some parameter combinations, the problem complexity has already been resolved.
In particular, \Prob is known to be weakly $\mathsf{NP}$-hard $\nd=1$ and $\nr=1$, as this setting captures the {\sc Knapsack} problem~\cite{Karp72}.
Furthermore, \Prob is known to be weakly $\mathsf{NP}$-hard for $\nd=2$, $\nw=1$, and $\nr=2$~\cite{LenstraRinnooy-Kan77}.
For parameter $\np$ as well as $\nw$, Heeger and Hermelin~\cite{HeegerH24} showed weak $\mathsf{W}[1]$-hardness even if $\nr=1$.

That leaves open the parameterized complexity for several parameter combinations.
We extend the known hardness results by showing the following:
\begin{itemize}
  \item \Prob is (strongly) $\mathsf{W}[1]$-hard parameterized by $\nd+\nr$ even if $\nw=1$. 
\end{itemize}
This result is obtained by a straightforward reduction showing that \Prob generalizes {\sc Bin Packing}.
Our main results are on the algorithmic side, where we show the following:
\begin{itemize}
  \item \Prob is fixed-parameter tractable parameterized by  $\np+\nd+\nr$.
  \item \Prob is fixed-parameter tractable parameterized by $\np+ \nw+ \nd$ or $\np+ \nw+ \nr$.
  \item \Prob can be solved in pseudo-polynomial time for constant $\nr$ or constant $\nd$.
\end{itemize}
For the first two, we employ reductions to {\sc Mixed Integer Linear Programming (MILP)}, and for the latter, we give a dynamic programming algorithm.

With our results, we resolve the parameterized complexity of \Prob for almost all parameter combinations of $\{\np,\nw,\nr,\nd\}$ and hence give the first comprehensive overview thereof.
The remaining question is whether \Prob is polynomial-time solvable for \emph{constant} $\np$ or fixed-parameter tractable for $\np+\nw$.
It also remains open whether \Prob is fixed-parameter tractable parameterized by $\np$ if all numbers are encoded in unary.
A technical report~\cite{ElffersW14} claims (strong) $\mathsf{NP}$-hardness even for $\np=2$ and $\nw = 1$.
If that result holds, then it would also settle the parameterized complexity for the parameter combination~$\np+\nw$, and for $\np$ when all processing times and weights are encoded in unary.
Otherwise, those questions would remain open.

Our results contribute to the growing body of investigating the parameterized complexity of fundamental scheduling problems~\cite{MnichW2015}; for reference, we refer to the open problem collection by Mnich and van Bevern~\cite{MnichvB2018}.

\section{Preliminaries}
\label{sec:preliminaries}

\paragraph{Scheduling.} The problem considered in this work is denoted \Prob in the standard three-field notation for scheduling problems by Graham~\cite{Graham1969}.
In this problem, we have $n$ jobs and one machine that can process one job at a time. 
Each job $j\in\{1,\ldots,n\}$ has a \emph{processing time} $p_j$, a \emph{release date}~$r_j$, a \emph{due date} $d_j$, and a \emph{weight} $w_j$, where we $p_j$, $r_j$, $d_j$, and~$w_j$ are non-negative integers.
We use \mbox{$\np$, $\nr$, $\nd$,} and~$\nw$ to denote the number of different processing times, release dates, due dates, and weights, respectively. 

A \emph{schedule} $\sigma : \{1,\ldots,n\}\rightarrow 
\mathbb{N}$ assigns to each job $j$ a \emph{starting time} $\sigma(j)$ to process it until its \emph{completion time} $\sigma(j)+p_j$, so no other job $j'\not=j$ must start during $j$'s \emph{execution time} $\sigma(j),\hdots,\sigma(j)+p_j-1$.
We call a job $j$ \emph{early} in a schedule $\sigma$ if $\sigma(j)+p_j\le d_j$; otherwise we call job $j$ \emph{tardy}. 
We say that the machine is \emph{idle} at time $s$ in a schedule $\sigma$ if no job's execution time contains $s$.
The goal is to find a schedule that minimizes the weighted number of tardy jobs or, equivalently, maximizes the weighted number of early jobs 
\begin{equation*}
  W=\sum_{j\mid \sigma(j)=s \wedge s+p_j\le d_j}w_j \enspace .
\end{equation*}
We call a schedule that maximizes the weighted number of early jobs \emph{optimal}.
Formally, the problem is defined as follows:

\smallskip

\optprob{\Prob}{A number $n$ of jobs, a list of processing times $(p_1,p_2,\ldots,p_n)$, a list of release dates $(r_1,r_2,\ldots,r_n)$, a list of due dates $(d_1,d_2,\ldots,d_n)$, and a list of weights~$(w_1,w_2,\ldots,w_n)$.}{Compute an optimal schedule, that is, a schedule $\sigma$ that maximizes $W=\sum_{j\mid \sigma(j)=s \wedge s+p_j\le d_j}w_j$.}

Given an instance $I$ of \Prob, we make the following observation: 
\begin{observation}
\label{obs:releaseduedate}
  Let $I$ be an instance of \Prob and let $d_{\max}$ be the largest due date of any job in~$I$.
  Let $I'$ be the instance obtained from~$I$ by setting $r'_j=d_{\max}-d_j$ and $d'_j=d_{\max}-r_j$ for each job~$j$.
  Then $I$ admits a schedule where the weighted number of early jobs is~$W$ if and only if $I'$ admits a schedule where the weighted number of early jobs is $W$.
\end{observation}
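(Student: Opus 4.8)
The plan is to exploit the time-reversal symmetry that is built into the transformation: setting $r'_j = d_{\max}-d_j$ and $d'_j = d_{\max}-r_j$ is precisely what happens to the release and due dates of each job when we reflect the whole time axis about the point $d_{\max}$. Concretely, I would associate to any schedule $\sigma$ its \emph{reversal} $\sigma'$, defined on each job $j$ by $\sigma'(j) = d_{\max}-\sigma(j)-p_j$. This sends the execution interval $[\sigma(j),\sigma(j)+p_j]$ to its mirror image $[d_{\max}-\sigma(j)-p_j,\ d_{\max}-\sigma(j)]$, which has the same length $p_j$. A first useful remark is that reflection about the fixed point $d_{\max}$ is an involution on interval placements, since applying it twice yields $d_{\max}-(d_{\max}-\sigma(j)-p_j)-p_j=\sigma(j)$; this means I only need to analyze one direction of the correspondence carefully, and the converse is the same map run backwards.

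The core of the argument is to check that the reversal turns an early job of $I$ into an early job of $I'$ and vice versa, and that this happens because the transformation swaps the two timing constraints. A job $j$ is feasible and early in $\sigma$ exactly when $r_j \le \sigma(j)$ and $\sigma(j)+p_j \le d_j$. Substituting $\sigma'(j)=d_{\max}-\sigma(j)-p_j$, the release-date constraint $r_j\le\sigma(j)$ rearranges to $\sigma'(j)+p_j = d_{\max}-\sigma(j)\le d_{\max}-r_j = d'_j$, which is exactly the new due-date constraint; symmetrically, the due-date constraint $\sigma(j)+p_j\le d_j$ rearranges to $\sigma'(j)=d_{\max}-\sigma(j)-p_j\ge d_{\max}-d_j = r'_j$, which is the new release-date constraint. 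Thus the old due-date condition becomes the new release condition and the old release condition becomes the new due-date condition, so $j$ is early and feasible in $\sigma'$ for $I'$ precisely when it was early and feasible in $\sigma$ for $I$. Since these algebraic identities are fully symmetric in $(r_j,d_j)$ and $(r'_j,d'_j)$ about the same reflection point $d_{\max}$, the identical computation gives the reverse implication, and the early-job set (hence the total early weight $W$) is preserved.

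What remains, and what I expect to require the only real care, is the feasibility bookkeeping needed to turn this interval-level correspondence into a genuine schedule in the sense of the formal definition. Non-overlap is immediate: reflection about $d_{\max}$ is an order-reversing isometry of the line, so pairwise-disjoint execution intervals stay pairwise disjoint. Non-negativity of start times must be argued only for the early jobs, and there it follows from the inequality above, since $\sigma'(j)\ge r'_j = d_{\max}-d_j\ge 0$ because $d_j\le d_{\max}$. The tardy jobs of $\sigma$ contribute nothing to $W$ and cannot simply be reflected (their completion times may exceed $d_{\max}$, giving negative mirror images), so I would instead place them one after another starting strictly after time $d_{\max}$; as every new due date satisfies $d'_j = d_{\max}-r_j\le d_{\max}$, such jobs necessarily complete after their due date and remain tardy in $I'$, so the early set of $\sigma'$ is exactly the reversal of the early set of $\sigma$ and the weighted number of early jobs is unchanged. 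Applying the same construction to a schedule of $I'$ (using the involution property) yields the converse direction, completing the equivalence.
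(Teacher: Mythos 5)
Your proposal is correct and uses exactly the reflection $\sigma'(j)=d_{\max}-\sigma(j)-p_j$ that the paper itself gives as the (one-line) justification of the observation; your write-up just fills in the routine verifications (constraint swap, disjointness, non-negativity, relocation of tardy jobs) that the paper leaves implicit. No further comparison is needed.
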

\cref{obs:releaseduedate} holds, as we can transform a schedule $\sigma$ for $I$ into a schedule~$\sigma'$ for $I'$ (with the same weighted number of early jobs) by setting $\sigma'(j)=d_{\max}-\sigma(j)-p_j$.
Intuitively, this means that we can switch the roles of release dates and due dates to obtain instances with the same objective value.

\paragraph{Mixed Integer Linear Programming.}
For several of our algorithmic results, we use reductions to {\sc Mixed Integer Linear Programming (MILP)}.
This problem is defined as follows:

\optprob{\textsc{Mixed Integer Linear Programming} (\textsc{MILP})}{A vector $x$ of $n$ variables, a subset~$S$ of the variables which are considered integer variables, a constraint matrix $A\in\mathbb{R}^{m\times n}$, and two vectors $b\in\mathbb{R}^m$, $c\in \mathbb{R}^n$.}{Compute an assignment to the variables (if one exists) such that all integer variables in $S$ are set to integer values, $Ax\le b$, $x\ge 0$, and $c^\intercal x$ is maximized.}

If all variables are integer variables, the problem is simply called {\sc Integer Linear Programming (ILP)}.
Due to Lenstra's well-known result for MILP~\cite{lenstra1983integer}, we have that:
\begin{theorem}[\cite{lenstra1983integer}]
\label{thm:micp}
  MILP is fixed-parameter tractable when parameterized by the number of integer variables.
\end{theorem}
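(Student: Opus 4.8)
The statement is Lenstra's classical theorem, so the plan is to reduce \textsc{MILP} to a pure integer feasibility problem in the fixed dimension $|S|$ and then solve that by Lenstra's algorithm for integer programming in fixed dimension. Throughout I would treat $d := |S|$ as the parameter and insist that every other step run in time polynomial in the bit-length of $A$, $b$, and $c$.

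First I would reduce optimization to feasibility. For a fixed assignment of the integer variables the residual problem is an ordinary linear program, so, when finite, the optimum $c^\intercal x$ is attained at a vertex of one of the fibers and is therefore a rational number whose encoding length is polynomially bounded in the input. Hence I can binary-search the optimal value $V$, each time testing feasibility of the augmented system $\{Ax \le b,\ x \ge 0,\ c^\intercal x \ge V\}$; this incurs only a polynomial overhead and leaves a feasibility question of the same form.

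Next I would eliminate the continuous variables without ever writing them down explicitly. Writing $x = (y, z)$ with $y \in \mathbb{R}^{d}$ the integer variables, the set $Q := \{\, y : \exists z,\ A(y,z) \le b,\ (y,z) \ge 0 \,\}$ of admissible integer-variable assignments is a rational polyhedron in $\mathbb{R}^{d}$, and $(y,z)$ is feasible for some $z$ exactly when $y \in Q$. I would avoid an explicit Fourier--Motzkin description of $Q$ (which could blow up, since the number of continuous variables is unbounded) and instead access $Q$ through a linear optimization oracle: maximizing any functional over $Q$ is itself a linear program over the original polyhedron, solvable in polynomial time. Thus \textsc{MILP}-feasibility becomes the purely integer question of whether $Q$ contains a lattice point, in the fixed dimension $d$.

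Finally I would run Lenstra's algorithm on $Q$. The idea is to sandwich $Q$ between concentric ellipsoids (computing a rounding via the ellipsoid method), apply LLL basis reduction to $\mathbb{Z}^{d}$ in the rounded geometry, and then branch using the flatness theorem: either $Q$ is fat enough in every lattice direction to guarantee---and exhibit---an integer point, or there is a primitive integer direction $c$ along which the lattice width of $Q$ is bounded by a function of $d$ alone, so that all integer points of $Q$ lie on only $f(d)$ parallel hyperplanes $c^\intercal y = k$; I would then recurse on each such hyperplane, dropping the dimension by one. The recursion has depth at most $d$ and branching bounded by a function of $d$, so it produces only $f(d)$ leaf subproblems, each handled by polynomially many oracle and LLL calls. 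The hard part is exactly this geometric core: proving and algorithmically realizing Khinchine's flatness theorem, i.e.\ computing via LLL a short dual-lattice vector that certifies a small lattice width. This is what pins the branching factor to a function of $d = |S|$ only, yielding fixed-parameter tractability rather than merely an $\mathsf{XP}$ bound.
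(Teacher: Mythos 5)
The paper does not prove this statement at all: it is imported verbatim from Lenstra's 1983 paper as a black box (note the citation in the theorem header), and everything in the paper that depends on it uses it only as an oracle. So there is no in-paper proof to compare against; what you have written is an outline of the proof in the cited reference itself. As such an outline it is essentially correct and follows the standard route: reduce optimization to feasibility by binary search, project the feasible region onto the integer coordinates to get a polyhedron $Q \subseteq \mathbb{R}^{d}$ accessed through an LP oracle rather than an explicit (potentially exponential) inequality description, and then run Lenstra's fixed-dimension algorithm on $Q$ via ellipsoid rounding, LLL reduction, and flatness-based branching with recursion depth $d$ and branching factor $f(d)$.

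Two caveats keep this from being a proof rather than a roadmap. First, you explicitly defer the algorithmic flatness theorem (``the hard part is exactly this geometric core''), and that is indeed where all the work lies; without it you have shown nothing about the branching factor. Second, the binary-search step needs more care for \emph{mixed} integer programs than you give it: to bound the encoding length of the optimal value you must first argue that some optimal (or near-optimal) solution has its integer part bounded by a polynomial-size box, since the integer variables a priori range over all of $\mathbb{Z}_{\ge 0}^{d}$ and the optimum of an MILP, unlike that of an LP, is not attained at a vertex of the original polyhedron. This is standard (it follows from bounds on integer points in rational polyhedra) but it is an assertion you make without justification, and it is exactly the kind of step that distinguishes a proof from a citation. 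For the purposes of this paper, citing Lenstra as the authors do is the appropriate move; your sketch would belong in a survey of that result, not here.
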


\clearpage
\pagebreak

\section{Hardness Results}
\label{sec:hardness}
In this section, we first discuss known hardness results for \Prob, and then present a novel parameterized hardness result.
Observe that for $r_j=0$ and $d_j=d$, the problem \Prob is equivalent to \textsc{Knapsack}, which is known to be weakly $\mathsf{NP}$-hard~\cite{Karp72}.
Further, there is a straightforward reduction from {\sc Partition} to \Prob that only uses two release dates and two due dates, and uniform weights~\cite{LenstraRinnooy-Kan77}. 
Finally, Heeger and Hermelin~\cite{HeegerH24} recently showed that the special case of \Prob without release dates is weakly $\mathsf{W}[1]$-hard when parameterized by either $\np$ or $\nw$.
Hence, (together with \cref{obs:releaseduedate}) we have that:
\begin{proposition}[\cite{HeegerH24,Karp72,LenstraRinnooy-Kan77}]
  The problem \Prob is 
  \begin{itemize}
    \item weakly $\mathsf{NP}$-hard even if $\nd=1$ and $\nr=1$,
    \item weakly $\mathsf{NP}$-hard even if $\nr=\nd=2$ and $\nw=1$,
    \item and weakly $\mathsf{W}[1]$-hard when parameterized by either $\np$ or $\nw$ even if $\nr=1$ or if $\nd=1$.
  \end{itemize}    
\end{proposition}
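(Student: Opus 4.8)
The plan is to treat each of the three bullet points separately, in each case reducing from a classical number problem or invoking a prior result together with \cref{obs:releaseduedate}.

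For the first bullet I would give a direct reduction from \textsc{Knapsack}. Given items with sizes $s_i$, values $v_i$, and capacity $B$, I would create one job per item with $p_j=s_i$, $w_j=v_i$, release date $r_j=0$, and common due date $d_j=B$, so that $\nr=\nd=1$. A job is early precisely when it completes by time $B$, hence a set of early jobs corresponds exactly to a set of items of total size at most $B$, and the weighted number of early jobs equals the total value of the selected items. Thus an optimal schedule yields an optimal \textsc{Knapsack} solution and vice versa, transferring the weak $\mathsf{NP}$-hardness of \textsc{Knapsack}~\cite{Karp72}.

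For the second bullet I would reduce from \textsc{Partition}. Given numbers $a_1,\dots,a_n$ summing to $2T$, I would introduce one \emph{partition job} per $a_i$ with $p_j=a_i$, $r_j=0$, $d_j=2T+1$, and weight $1$, together with a single \emph{separator job} of processing time $1$, release date $T$, due date $T+1$, and weight $1$. The separator can be early only if it occupies exactly the slot $[T,T+1)$, which splits the usable timeline into two windows $[0,T)$ and $[T+1,2T+1)$, each of length $T$. Since the partition jobs have total processing time $2T$, all $n+1$ jobs can be made early if and only if the $a_i$ split into two subsets each summing to $T$, i.e.\ exactly when the \textsc{Partition} instance is a yes-instance. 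This construction uses only the release dates $0,T$, the due dates $T+1,2T+1$, and uniform weights, giving $\nr=\nd=2$ and $\nw=1$, and so inherits the weak $\mathsf{NP}$-hardness of \textsc{Partition}~\cite{LenstraRinnooy-Kan77}.

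For the third bullet I would invoke the result of Heeger and Hermelin~\cite{HeegerH24}, which shows that the release-date-free problem $1\mid\mid\sum w_jU_j$ is weakly $\mathsf{W}[1]$-hard parameterized by $\np$, and separately by $\nw$. As the release-date-free case is exactly $\nr=1$, this settles the $\nr=1$ variant directly. To obtain the $\nd=1$ variant I would apply \cref{obs:releaseduedate}: the substitution $r'_j=d_{\max}-d_j$, $d'_j=d_{\max}-r_j$ turns an instance with a single release date into one with a single due date while leaving the multisets of processing times and weights untouched, so it preserves both $\np$ and $\nw$ as well as the optimal objective value. The only point requiring care is the bookkeeping, namely checking that each gadget realizes precisely the claimed numbers of distinct release dates, due dates, and weights and that the transformation of \cref{obs:releaseduedate} genuinely fixes $\np$ and $\nw$; beyond this there is no real obstacle, since all three statements follow from established reductions or a single application of the stated observation.
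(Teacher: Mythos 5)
Your proposal is correct and matches the paper's approach: the paper itself gives no formal proof of this proposition but justifies it in one paragraph by exactly the same three ingredients --- the \textsc{Knapsack} equivalence for $\nr=\nd=1$, the Lenstra et al.\ \textsc{Partition} reduction with a separator job for $\nr=\nd=2$ and $\nw=1$, and the Heeger--Hermelin $\mathsf{W}[1]$-hardness combined with \cref{obs:releaseduedate} to swap the roles of release and due dates while preserving $\np$ and $\nw$. You merely spell out the cited reductions in more detail than the paper does, and the details check out.
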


The reduction from {\sc Partition} to \Prob by Lenstra et al.~\cite{LenstraRinnooy-Kan77} can straightforwardly be extended to a reduction from {\sc Bin Packing}, which yields the following result:

\begin{theorem}
\label{thm:hardness}
  The problem \Prob\ is strongly $\mathsf{NP}$-hard, and strongly $\mathsf{W}[1]$-hard when parameterized by $\nr+\nd$, even if $\nw=1$.
\end{theorem}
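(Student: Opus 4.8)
The plan is to reduce from \textsc{Bin Packing}, which is strongly $\mathsf{NP}$-hard and $\mathsf{W}[1]$-hard when parameterized by the number of bins. Recall that in \textsc{Bin Packing} we are given $n$ items with sizes $s_1,\ldots,s_n$, a bin capacity $C$, and a target number $k$ of bins, and we must decide whether all items can be packed into $k$ bins each of capacity $C$. The parameter for the reduction will be the number of bins $k$, which I want to translate into both $\nr$ and $\nd$ being bounded by a function of $k$, while keeping $\nw=1$.

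The construction is as follows. I would create one job per item, with processing time $p_j = s_j$ and uniform weight $w_j = 1$. The key idea is to carve out $k$ disjoint ``time windows'' on the machine, one per bin, each of length exactly $C$, using release dates and due dates. Concretely, I place the windows consecutively so that the $i$-th bin corresponds to the time interval $[(i-1)C, iC]$. To force each job to be scheduled inside one of these windows rather than straddling a boundary, every job receives the same set of admissible release/due-date pairs; the natural way to realize this with few distinct values is to make all jobs share the same release date $r_j = 0$ and the same due date $d_j = kC$, and then argue that any feasible schedule of all $n$ jobs early must pack them into the interval $[0,kC]$ whose total capacity is exactly $kC$. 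Since the total length available equals $k$ times the capacity, and each window of length $C$ corresponds to a bin, an assignment of all jobs as early jobs within $[0,kC]$ is essentially a fractional-free packing; the subtlety is that jobs may cross the artificial boundaries at multiples of $C$. I would handle this either by inserting tiny mandatory idle gaps enforced through additional release/due-date structure, or (cleaner) by arguing that crossing boundaries is irrelevant because only the total used length $\le kC$ matters, and a schedule packing total size $\sum_j s_j$ into length $kC$ corresponds to a valid bin assignment precisely when $\sum_j s_j$ forces tightness; this recovers exactly the \textsc{Bin Packing} feasibility question when the instance is normalized so that $\sum_j s_j = kC$.

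The correctness argument then splits into the two directions. For the forward direction, given a packing of the items into $k$ bins, I schedule the items assigned to bin $i$ back-to-back inside the window $[(i-1)C, iC]$; since their total size is at most $C$ they all complete by $iC \le kC = d_{\max}$, so all $n$ jobs are early and the weighted number of early jobs is $n$. For the reverse direction, given a schedule in which all $n$ jobs are early, every job completes by $kC$ and starts at or after $0$, so the entire schedule fits in $[0,kC]$; reading off which of the $k$ unit-capacity windows each job predominantly occupies (after the tightness normalization forces no boundary-crossing waste) yields a valid assignment of items to $k$ bins. Counting distinct parameter values gives $\nr$ and $\nd$ each bounded by a constant or by $O(k)$ depending on whether I use the single-window or the $k$-window formulation, and $\nw = 1$ throughout; combined with the strong $\mathsf{NP}$-hardness and $\mathsf{W}[1]$-hardness (parameterized by the number of bins) of \textsc{Bin Packing}, this yields both claimed hardness results simultaneously.

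The main obstacle will be controlling the boundary-crossing issue cleanly while keeping $\nr$ and $\nd$ small: I need the reduction to forbid a job from being scheduled so that it consumes capacity from two different bins, yet I cannot afford distinct release/due dates per bin-boundary if I want $\nr+\nd$ to stay bounded by a function of $k$ alone. The resolution I expect to use is to make the \textsc{Bin Packing} instance tight (total item size exactly $kC$) so that \emph{any} early schedule of all jobs leaves no idle time in $[0,kC]$, and then to scale or perturb sizes (e.g.\ multiply all sizes by a large factor and shift them) so that the only way to fill $[0,kC]$ without idle time forces each job to lie entirely within a single length-$C$ block; this standard tightness-plus-scaling trick converts the geometric packing constraint into the combinatorial bin-assignment constraint while introducing only $O(k)$ distinct release and due dates. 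Verifying that this scaling does not blow up the encoding length beyond polynomial (preserving \emph{strong} hardness) is the delicate bookkeeping step I would carry out carefully.
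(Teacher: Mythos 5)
There is a genuine gap in the mechanism you chose to enforce the bin boundaries, and it is fatal to the reduction as described. With a single release date $0$ and a single due date $kC$, the scheduling instance only asks whether the items fit into one interval of total length $kC$; nothing in the instance ``sees'' the boundaries at multiples of $C$. Your proposed fix---normalize to $\sum_j s_j = kC$ and then scale or perturb sizes so that a gap-free schedule forces each job into a single length-$C$ block---does not work. Tightness only rules out idle time; it does not prevent a job from straddling a boundary. Concretely, take items $\{2,4\}$ with $C=3$ and $k=2$: the total is exactly $kC=6$ and both jobs can be scheduled early back-to-back in $[0,6]$, yet this is a ``no''-instance of \textsc{Bin Packing}. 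No amount of uniform scaling changes this, since scaling rescales the boundaries along with the item sizes. In short, deciding whether all jobs can be early in your instance is equivalent to checking $\sum_j s_j \le kC$, which is trivial, so the backward direction of your correctness argument fails.

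The alternative you mention and then reject---inserting mandatory gaps via additional release/due-date structure---is in fact the right move, and it is exactly what the paper does. It adds $b-1$ unit-length \emph{separator} jobs, where the $i$-th one has release date $iB+i-1$ and due date $iB+i$, so it can only be early if it occupies precisely the unit slot $[iB+i-1,\,iB+i]$. Requiring \emph{all} jobs (items and separators) to be early then pins the separators at the boundaries, carving $[0,\,bB+b-1]$ into $b$ intervals of length exactly $B$ that no item job may cross. Your worry that this is unaffordable is a miscalculation: there are only $b-1$ boundaries, so this adds $O(b)$ distinct release dates and due dates, giving $\nr+\nd\in O(b)$---exactly the bound needed for a parameterized reduction from \textsc{Bin Packing} parameterized by the number of bins, which is strongly $\mathsf{W}[1]$-hard. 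The rest of your argument (one unit-weight job per item, the forward direction packing each bin's items into its window, and invoking strong $\mathsf{NP}$-hardness and $\mathsf{W}[1]$-hardness of \textsc{Bin Packing}) is sound once the separators are in place.
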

\begin{proof}
  We present a parameterized reduction from {\sc Bin Packing} parameterized by the number of bins.
  Here, we are given a set $I=\{1,\ldots,|I|\}$ of items with unary encoded sizes~$s_j$ for $j\in I$, and $b$ bins of size~$B$.
  We are asked whether it is possible to distribute the items to the bins such that no bin is overfull, that is, the sum of the sizes of items put into the same bin does not exceed~$B$.
  The reduction uses a similar idea as the one from {\sc Partition} used by Lenstra et al.~\cite{LenstraRinnooy-Kan77}.
  Intuitively, we use special jobs that can only be scheduled at very specific points in time to separate the total available time into several ``sections'', representing the bins.

  Formally, given an instance of {\sc Bin Packing}, we create an instance of \Prob as follows.
  For every item in $j\in I$ we create a job $j$ with processing time $p_j=s_j$, weight one, release date zero, and due date $d_j=b\cdot B+b-1$.
  Furthermore, we create $b-1$ ``separator'' jobs $j^\star_1, \ldots, j^\star_{b-1}$.
  Job $j^\star_i$ has processing time one, weight one, release date $r_{j^\star_i}=i\cdot B+i-1$, and due date $d_{j^\star_i}=i\cdot B+i$.
  Note that all jobs have weight one.
  Furthermore, all jobs corresponding to items of the {\sc Bin Packing} instance have the same release date and due date and there are $b-1$ additional separator jobs, each with a different release date and due date.
  Hence, we have that $\nd+\nr\in O(b)$.
  The \Prob instance can clearly be constructed in polynomial time.
  In the remainder of the proof, we show that the {\sc Bin Packing} instance is a ``yes''-instance if and only if there is a schedule for the \Prob instance where no job is tardy.
  As {\sc Bin Packing} strongly $\mathsf{NP}$-hard~\cite{GJ79}, and strongly $\mathsf{W}[1]$-hard when parameterized by the number of bins~\cite{jansen2013bin}, the result follows.

  In the forward direction, suppose that the {\sc Bin Packing} instance is a ``yes''-instance and we have a partition of the items into the bins such that no bin is overfull.
  We schedule the jobs as follows. We schedule separator job $j^\star_i$ to starting time $i\cdot B+i-1$.
  All jobs corresponding to items placed into the $i^{\textnormal{th}}$ bin are scheduled into the time interval from $(i-1)\cdot B+i-1$ to $i\cdot B+i-1$ in an arbitrary order.
  Note that since the bins have size $B$ and the described time interval has length $B$, we can schedule all jobs into this interval.
  Furthermore, since all jobs corresponding to items have release date zero and due date $b\cdot B+b-1$, we have that all those jobs respect their release dates and due dates.
  Lastly, we have that the described intervals do not contain the time slots in which the separator jobs are scheduled.
  It follows that we have a schedule where no job is tardy.

  In the backward direction, suppose that there is a schedule for the \Prob instance where no job is tardy.
  Then the separator jobs $j^\star_i$ are scheduled to starting time $i\cdot B+i-1$, since by construction this is the only way to schedule the separator jobs such that they are not tardy.
  We create a distribution of the items into the bins as follows.
  We assign all items corresponding to jobs scheduled in the time interval from $(i-1)\cdot B+i-1$ to $i\cdot B+i-1$ to bin $i$.
  Since the time interval has size $B$ and no job can intersect with the separator jobs, we know that no bin is overfull.
  Furthermore, since all jobs are scheduled we have that each item is in one bin.
  It follows that the {\sc Bin Packing} instance is a ``yes''-instance.
\end{proof}

\section{{\boldmath \Prob} parameterized by {\boldmath $\np+\nr+\nd$}}
\label{sec:fpt1}
In this section, we present the following result.
\begin{theorem}
\label{thm:fpt1}
  The problem \Prob is fixed-parameter tractable when parameterized by \mbox{$\np+\nr+\nd$}.
\end{theorem}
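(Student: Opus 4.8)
The plan is to model \Prob as a \textsc{MILP} instance with a number of integer variables bounded by a function of $\np+\nr+\nd$, and then invoke \cref{thm:micp}. The starting point is that jobs sharing the same triple $(p_j,r_j,d_j)$ are interchangeable for scheduling purposes, so I would group the jobs into at most $\np\cdot\nr\cdot\nd$ \emph{types}. Since we only ever benefit from keeping the highest-weight jobs of a type early, I would pre-sort each type by weight; a set of early jobs is then fully described by how many jobs $n_\theta$ of each type $\theta$ are selected. The weighted objective ``sum of the $n_\theta$ largest weights of type $\theta$'' is a concave, separable function of the integer counts, which I would linearise using one continuous variable $z_j\in[0,1]$ per job together with the constraints $\sum_{j\in\theta}z_j=n_\theta$ and objective $\sum_j w_j z_j$; because weights are sorted and we maximise, an optimal solution automatically activates exactly the $n_\theta$ heaviest jobs of each type. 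Continuous variables are free of charge in \cref{thm:micp}, so the entire difficulty is concentrated in expressing \emph{schedulability} of the selected counts with few integer variables.

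For schedulability I would exploit that the distinct release and due dates induce at most $\nr+\nd$ \emph{event times} $\tau_1<\dots<\tau_m$, splitting the timeline into at most $\nr+\nd+1$ \emph{segments}. The key structural observation is that, on a single machine, at most one job's execution interval can contain any fixed event time in its interior; hence in any schedule at most $\nr+\nd$ jobs \emph{straddle} a segment boundary, while every remaining early job lies entirely inside one segment. A job lying inside a segment $[\tau_i,\tau_{i+1})$ faces no binding release or due constraint there (its window contains the whole segment), so the interior jobs of a segment can be placed back-to-back in arbitrary order. Consequently their only constraint is that, per segment, the total processing time assigned does not exceed the free space; I would therefore argue a normal form in which interior jobs are packed purely by total size, and encode them with integer variables $y_{\theta,i}$ counting type-$\theta$ interior jobs placed in segment $i$ (nonzero only when the segment lies within the type's window), giving $O(\np\nr\nd\cdot(\nr+\nd))$ integer variables.

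The remaining ingredient is the bounded set of straddling jobs. Since there are at most $\nr+\nd$ of them, I would describe them explicitly: integer variables choosing the type of each straddler and the maximal run of consecutive segments it spans, together with continuous variables for its exact start time (so its consumption in each spanned segment is a linear expression). The \textsc{MILP} constraints then state, for each segment, that the sum of interior-job sizes $\sum_\theta p_\theta\, y_{\theta,i}$ plus the straddler portions falling into that segment is at most the segment length, that each selected job respects its window, and that the per-type totals $n_\theta$ equal the straddler count of type $\theta$ plus $\sum_i y_{\theta,i}$. All integer variables are bounded by a function of $\np+\nr+\nd$, so \cref{thm:micp} yields the claimed fixed-parameter running time; \cref{obs:releaseduedate} may be used to keep the release-/due-date roles symmetric where convenient.

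The main obstacle is establishing the structural normal form that justifies this encoding: I must show that an arbitrary optimal schedule can be transformed, without loss of objective, into one in which (i) at most $\nr+\nd$ jobs straddle boundaries, (ii) all other early jobs sit inside a single segment, and (iii) within each segment the interior jobs and the straddler portions are laid out contiguously so that feasibility is \emph{equivalent} to the per-segment total-size inequalities. The delicate points are that non-preemption forces each straddler to occupy a contiguous block across several segments, so the free space seen by the interior jobs depends on the straddlers' placement, and that start times must ultimately be integral; I expect to handle the latter by an exchange and left-shifting argument showing that a fractional-start feasible solution can be snapped to integer starts without creating overlaps, since all input data are integral.
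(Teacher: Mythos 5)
Your proposal follows essentially the same route as the paper's proof: group jobs into at most $\np\cdot\nr\cdot\nd$ types, use the $\nr+\nd$ event times to split the timeline into segments, introduce integer counters for jobs scheduled inside each segment plus a bounded family of variables for the at most $\nr+\nd$ straddling jobs, impose per-interval capacity and window constraints, linearise the ``take the heaviest jobs of each type'' objective with continuous per-job variables, and apply \cref{thm:micp}. The normal form you flag as the remaining obstacle is exactly what the paper establishes in its two correctness lemmas (via the pigeonhole/exchange arguments you sketch), so the plan is sound and matches the paper's argument.
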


To prove \cref{thm:fpt1}, we present a reduction from \Prob to MILP that creates instances of MILP where the number of integer variables is upper-bounded by a function of $\np$, $\nr$, and $\nd$.
The result then follows from \cref{thm:micp}.

Given an instance of \Prob, we say that two jobs $j$ and $j'$ have the same type if they have the same processing time, the same release date, and the same due date, that is, $p_j=p_{j'}$, $r_j=r_{j'}$, and $d_j=d_{j'}$.
Let $\mathcal T$ denote the set of all types.
Note that we have $|\mathcal{T}|\le \np\cdot\nr\cdot\nd$. 
Furthermore, we sort all release dates and due dates such that if the $k^{\textnormal{th}}$ release date equals the $\ell^{\textnormal{th}}$ due date we require the release date to appear later in the ordering.
Let $\mathcal L$ denote an ordered list of all release dates and due dates that complies to the aforementioned requirement.
Note that we have $|\mathcal{L}|\le \nr+\nd$.

We now create an integer variable $x^t_{a,b}$ for all $t\in\mathcal{T}$ and all $a,b\in\mathcal{L}$ with $a<b$.
Intuitively, if $a$ and $b$ are consecutive in $\mathcal{L}$, then this variable tells us how many jobs of type $t$ to schedule in the time interval $[a,b]$.
If $a$ and $b$ are not consecutive in $\mathcal{L}$, then $x^t_{a,b}$ is a zero-one variable that tells us whether we schedule a job of type $t$ in a way such that its processing time intersects all $c\in\mathcal{L}$ with $a<c<b$. 

We create the following constraints.
The first set of constraints is 
\begin{equation}
\label{eq:1}
  \forall t\in \mathcal{T}: \sum_{a,b\in\mathcal{L}\mid a< b} x^t_{a,b}\le n_t,
\end{equation}
where $n_t$ denotes the number of jobs of type $t$ in the \Prob instance.
Intuitively, these constraints ensure that we do not try to schedule more jobs of type $t$ than there are available.

The second set of constraints is 
\begin{equation}
\label{eq:2}
  \forall a,b\in \mathcal{L} \text{ with } a<b \text{ and } \forall t\in\mathcal{T} \text{ with } r_t>b \text{ or } d_t<a: x^t_{a,b}= 0,
\end{equation}
where $r_t$ denotes the release date of jobs of type $t$ and $d_t$ denotes the due date of jobs of type~$t$.
Intuitively, these constraints prevent us from trying to schedule a job of a certain type into an interval that conflicts with the job's release date or due date.

The third set of constraints is
\begin{equation}
\label{eq:3}
  \forall c\in\mathcal{L}: \sum_{t\in\mathcal{T}}\sum_{a,b\in\mathcal{L}\mid a<c<b} x^t_{a,b}\le 1 \enspace .
\end{equation}
Intuitively, these constraints ensure that for each $c\in\mathcal{L}$ at most one job is scheduled that intersects $c$.

The fourth set of constraints is 
\begin{align}
\label{eq:4}
  & \forall a,b\in \mathcal{L} \text{ with } a<b: \nonumber \\ & \sum_{t\in \mathcal{T}}\left(\sum_{a',b'\in\mathcal{L}\mid a'\ge a,b'\le b,a'<b'} p_t\cdot x^t_{a',b'} +
      \sum_{a',b'\in\mathcal{L}\mid a'<a,b'>b}(b-a)\cdot x^t_{a',b'}\right)\le b-a,
\end{align}
where $p_t$ denotes the processing time of jobs of type $t$.
Intuitively, these constraints make sure that for every interval $[a,b]$ with $a,b\in\mathcal{L}$ we do not schedule jobs with total processing time more than $b-a$ into that interval.

Now we specify the objective function.
If we want to schedule a certain number of jobs with type $t$ early, we take the ones with the largest weight in order to minimize the weighted number of tardy jobs.
Let $x^t=\sum_{a,b\in\mathcal{L}; a< b} x^t_{a,b}$.
Intuitively, $x^t$ is the number of jobs with type $t$ that are scheduled early.
For each type, $t\in\mathcal{T}$, order the jobs of type $t$ in the \Prob instance by their weight (largest to smallest) and let $w^t_i$ denote the $i$th largest weight of jobs with type $t$.
The objective function we aim to maximize is
\begin{equation}
\label{eq:obj}
\sum_{t\in\mathcal{T}}\sum_{i=1}^{x^t} w_i^t \enspace .
\end{equation}
    
Note that constraints \eqref{eq:1}, \eqref{eq:2}, \eqref{eq:3}, and \eqref{eq:4} are linear. 
The objective function \eqref{eq:obj} is convex~\cite{HermelinKPS21} and it is known that we can obtain an equivalent MILP with a linear objective function at the cost of introducing additional fractional variables and constraints~\cite{HermelinKPS21}. 
Furthermore, we can observe the following:
\begin{observation}
\label{obs:vars}
  The number of integer variables in the created MILP instance is in $O(\np\cdot\nr\cdot\nd\cdot(\nr+\nd)^2)$.
\end{observation}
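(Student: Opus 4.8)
The plan is to bound the number of integer variables by a direct counting argument, leaning on the two cardinality bounds already established in the text, namely $|\mathcal{T}| \le \np\cdot\nr\cdot\nd$ and $|\mathcal{L}| \le \nr+\nd$. First I would observe that the only integer variables appearing in the construction are the variables $x^t_{a,b}$, each indexed by a type $t\in\mathcal{T}$ together with an ordered pair $a,b\in\mathcal{L}$ with $a<b$. Hence the number of integer variables equals exactly the number of such index triples, and it suffices to count these.

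Next I would count the triples via a product bound. There are at most $|\mathcal{T}|$ choices for the type~$t$, and the number of pairs $a,b\in\mathcal{L}$ with $a<b$ is at most $\binom{|\mathcal{L}|}{2}\le|\mathcal{L}|^2$. Substituting $|\mathcal{T}|\le\np\cdot\nr\cdot\nd$ and $|\mathcal{L}|\le\nr+\nd$ then gives an upper bound of $\np\cdot\nr\cdot\nd\cdot(\nr+\nd)^2$ on the number of integer variables, which is precisely the claimed $O(\np\cdot\nr\cdot\nd\cdot(\nr+\nd)^2)$.

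The one place where the argument could conceivably go wrong, and hence the only point deserving explicit attention, is the conversion of the convex objective~\eqref{eq:obj} into a linear one. This linearization introduces additional variables and constraints, so a priori one might worry that the integer-variable count grows. I would therefore invoke the fact (as cited from \cite{HermelinKPS21}) that this transformation introduces only \emph{fractional} variables, leaving the set of integer variables untouched. Since \cref{obs:vars} concerns only the number of integer variables, the added continuous variables are irrelevant to the bound, and the count computed above is final. This is really the crux of the statement: the bound is meaningful precisely because the running time guaranteed by \cref{thm:micp} depends only on the integer variables, and the linearization step is ``free'' with respect to those. I do not anticipate any genuine obstacle here beyond being careful to separate integer from fractional variables in the bookkeeping.
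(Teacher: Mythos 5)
Your proposal is correct and matches the paper's (implicit) argument: the observation follows directly from counting the variables $x^t_{a,b}$ via $|\mathcal{T}|\le\np\cdot\nr\cdot\nd$ and $|\mathcal{L}|\le\nr+\nd$, and the paper likewise notes that the linearization of the objective adds only fractional variables. Your explicit attention to that last point is exactly the right place to be careful, but it introduces nothing beyond what the paper already states.
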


Next, we show the correctness of the reduction.
\begin{lemma}
\label{lem:corr1}
  If the \Prob instance admits a schedule where the weighted number of early jobs is~$W$, then the created MILP instance admits a feasible solution that has objective value at least $W$.
\end{lemma}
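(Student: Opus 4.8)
The plan is to read off a feasible MILP solution directly from a schedule $\sigma$ that makes jobs of total weight $W$ early, discarding all tardy jobs. For each early job $j$, say of type $t$, I let $a_j$ be the latest entry of $\mathcal{L}$ occurring at or before the start time $\sigma(j)$, and $b_j$ the earliest entry of $\mathcal{L}$ occurring at or after the completion time $\sigma(j)+p_j$, where comparisons are taken with respect to the order of $\mathcal{L}$ (the ``release-after-due'' tie-break disambiguating equal values). Both are well defined: $\sigma$ respects release dates, so $r_j\le\sigma(j)$ and the entry $r_j\in\mathcal{L}$ witnesses $a_j$; and $j$ is early, so $\sigma(j)+p_j\le d_j$ and the entry $d_j\in\mathcal{L}$ witnesses $b_j$. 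I then set each $x^t_{a,b}$ to the number of early jobs of type $t$ with $(a_j,b_j)=(a,b)$; each early job contributes to exactly one variable.

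The technical core is a single geometric observation: for a job $j$ mapped to $(a_j,b_j)$, every landmark $c\in\mathcal{L}$ with $a_j<c<b_j$ lies strictly inside the execution interval, i.e.\ $\sigma(j)<c<\sigma(j)+p_j$. This is immediate from the maximality of $a_j$ (which forces $c>\sigma(j)$) and the minimality of $b_j$ (which forces $c<\sigma(j)+p_j$). From this I obtain constraints~\eqref{eq:3} and~\eqref{eq:4} together with the $0$--$1$ validity of the non-consecutive variables. For~\eqref{eq:3}, every job contributing to the sum for a fixed $c$ has $c$ strictly in its execution interval, and since the machine runs one job at a time, at most one such job exists; the same argument shows that any variable indexed by a non-consecutive pair receives at most one job. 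For~\eqref{eq:4}, a job counted in the first sum satisfies $a\le a_j\le\sigma(j)$ and $\sigma(j)+p_j\le b_j\le b$, hence is scheduled entirely inside $[a,b]$, while a job counted in the second sum has $\sigma(j)<a$ and $\sigma(j)+p_j>b$ and thus spans all of $[a,b]$; by the one-job-at-a-time property there is at most one spanning job, and it leaves no room for jobs of the first kind, so the left-hand side is at most $b-a$.

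The remaining constraints and the objective bound are routine. Constraint~\eqref{eq:1} holds because $\sum_{a<b}x^t_{a,b}$ merely counts the early jobs of type $t$, of which there are at most $n_t$. Constraint~\eqref{eq:2} holds because any job mapped to $x^t_{a_j,b_j}$ satisfies $r_t\le\sigma(j)<b_j$ and $a_j\le\sigma(j)+p_j\le d_t$, so we never put positive mass on a variable forced to zero. Finally, since type $t$ contributes exactly $x^t$ early jobs and the objective~\eqref{eq:obj} sums the $x^t$ largest type-$t$ weights, the value attributed to type $t$ is at least the total weight of its early jobs in $\sigma$; summing over types gives objective value at least $W$. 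The main obstacle I anticipate is setting up the geometric crossing claim so that it is robust at the interval endpoints and under equal-valued entries of $\mathcal{L}$; once that is pinned down, both constraint~\eqref{eq:3} and the capacity constraint~\eqref{eq:4} follow from the elementary fact that only one job occupies any given instant.
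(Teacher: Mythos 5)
Your proposal is correct and follows essentially the same route as the paper's proof: the same assignment of each early job to the variable $x^t_{a,b}$ indexed by the tightest enclosing pair of landmarks, the same one-job-at-a-time argument for constraints~\eqref{eq:3} and~\eqref{eq:4} (the paper phrases the capacity bound as a proof by contradiction via pigeonhole, you argue it directly, but the content is identical), and the same counting argument for the objective. No gaps.
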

\begin{proof}
  Suppose we are given a schedule $\sigma$ for the \Prob instance where the weighted number of early jobs is $W$.
  We create a feasible solution for the MILP instance as follows.

  Initially, we set all variables to zero.
  Now we iterate through the jobs that are early in the schedule~$\sigma$ in some arbitrary way.
  Let job $j$ be early in $\sigma$ and let $t\in \mathcal{T}$ by the type of job $j$.
  Let $\sigma(j)=s$ and let $a\in\mathcal{L}$ be the largest element of $\mathcal{L}$ in the ordering such that $a\le s$.
  Let $b\in\mathcal{L}$ be the smallest element of~$\mathcal{L}$ in the ordering such that $b\ge s+p_j$.
  We increase the variable $x_{a,b}^t$ by one.

  We first show that the produced solution for the MILP instance is feasible.
  Since for every job $j$ of type $t\in \mathcal{T}$, we increase at most one variable $x^t_{a,b}$ for some $a,b\in\mathcal{L}$ with $r_j\le a$ and $d_j\ge b$ by one, the constraints~\eqref{eq:1} and~\eqref{eq:2} are clearly met.
  Now assume, for the sake of contradiction, that for some $c\in\mathcal{L}$, the corresponding constraint \eqref{eq:3} is not met.
  Then there are $a,a',b,b'\in\mathcal{L}$ with $a<c<b$ and $a'<c<b'$ and $t,t'\in\mathcal{T}$ such that both $x_{a,b}^t\ge 1$ and $x_{a',b'}^{t'}\ge 1$.
  Let $j$ and $j'$ be the jobs that lead to the increase of $x_{a,b}^t$ and $x_{a',b'}^{t'}$, respectively.
  Then we have $\sigma(j)<c<\sigma(j)+p_j$ and $\sigma(j')<c<\sigma(j')+p_{j'}$, that is, jobs $j$ and $j'$ are in conflict.
  This contradicts the assumption that $\sigma$ is a schedule.
  Lastly, assume that for some $a,b\in\mathcal{L}$, the corresponding constraint \eqref{eq:4} is not met.
  Note that since the constraints \eqref{eq:3} are met, there is at most one combination of $a',b'\in\mathcal{L}$ and $t\in\mathcal{T}$ with $a'<a$ and $b'>b$ such that $x_{a',b'}^t\ge 1$ and in particular, if $x_{a',b'}^t\ge 1$ then $x_{a',b'}^t= 1$.
  Hence, we make the following case distinction.
  \begin{enumerate}
    \item There is a combination of $a',b'\in\mathcal{L}$ and $t\in\mathcal{T}$ with $a'<a$ and $b'>b$ such that $x_{a',b'}^t= 1$.

      Let $j$ denote the job of type $t$ that caused us to increase $x_{a',b'}^t$ to one.
      In particular, this means that $\sigma(j)<a$ and $\sigma(j)+p_j>b$.
      If the constraint \eqref{eq:4} is not met, then there is a combination of $a'',b''\in\mathcal{L}$ and $t'\in\mathcal{T}$ with $a''\ge a$, $b''\le b$, and $a''<b''$ such that $x_{a'',b''}^{t'}\ge 1$.
      Let $j'$ denote the job of type $t'$ that caused us to increase $x_{a'',b''}^{t'}$ by one. 
      Then we have that $\sigma(j')\ge a$ and $\sigma(j')+p_{j'}\le b$.
      Clearly, jobs $j$ and $j'$ are in conflict, contradicting the assumption that $\sigma$ is a schedule.

    \item For all combinations of $a',b'\in\mathcal{L}$ and $t\in\mathcal{T}$ with $a'<a$ and $b'>b$ we have that $x_{a',b'}^t= 0$.

      In this case, we must have that
      \begin{equation*}
        \sum_{t\in \mathcal{T}}\sum_{a',b'\in\mathcal{L}\mid a'\ge a,b'\le b,a'<b'} p_t\cdot x^t_{a',b'} > b-a \enspace .
      \end{equation*}
      This implies that the total processing times of jobs scheduled to start at time $a$ or later and finish at time $b$ or earlier exceeds $b-a$.
      By the pigeonhole principle, there have to be two jobs that are in conflict, contradicting the assumption that $\sigma$ is a schedule.
  \end{enumerate}
  We can conclude that the constructed solution for the MILP instance is feasible.
  
  It remains to show that the solution has objective value at least $W$.
  To this end, consider $x^t=\sum_{a,b\in\mathcal{L}; a< b} x^t_{a,b}$.
  By construction of the solution, we have that $x^t$ is the number of jobs of type $t$ that are scheduled early in $\sigma$.
  The weight of those jobs cannot be larger than $\sum_{i=1}^{x^t} w_i^t$, since this is precisely the weight of the $x^t$ heaviest jobs of type $t$ in the \Prob instance.
  It follows that 
  \begin{equation*}
    \sum_{t\in\mathcal{T}}\sum_{i=1}^{x^t} w_i^t\ge W \enspace .
  \end{equation*}
  This concludes the proof.
\end{proof}

\begin{lemma}
\label{lem:corr2}
  If the created MILP instance is feasible and admits a solution with objective value~$W$, then the \Prob instance admits a schedule where the weighted number of early jobs is at least $W$.
\end{lemma}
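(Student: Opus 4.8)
The plan is to invert the construction from the proof of \cref{lem:corr1}: read off from the MILP solution how many jobs of each type should be made early and in which part of the timeline they belong, and then realize this as an actual non-overlapping schedule. First I would fix, for each type $t\in\mathcal{T}$, the quantity $x^t=\sum_{a,b\in\mathcal{L};\,a<b} x^t_{a,b}$ and select the $x^t$ heaviest jobs of type $t$ to be scheduled early. Constraint \eqref{eq:1} guarantees $x^t\le n_t$, so enough jobs of each type are available, and by choosing the heaviest ones the total weight of the selected jobs equals the objective value $\sum_{t\in\mathcal{T}}\sum_{i=1}^{x^t}w_i^t=W$. Hence it suffices to produce a schedule in which every selected job is early; its weighted number of early jobs is then at least $W$.

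Second, I would build such a schedule by a left-to-right sweep over the landmarks in $\mathcal{L}$. The variables with $a,b$ non-consecutive (the ``spanning'' jobs) form the rigid part: by constraint \eqref{eq:3} at most one spanning job crosses any landmark $c$, so the spanning jobs have pairwise disjoint sets of interior landmarks and can be totally ordered, say, by their leftmost interior landmark. I would place them as contiguous blocks in this order, each inside its designated window $[a,b]$ and crossing exactly its interior landmarks, and then greedily fill the short jobs (those with $a,b$ consecutive) assigned to each elementary interval into the remaining free space, packed consecutively from the left. Constraint \eqref{eq:2} is used to ensure that the window $[a,b]$ associated with a placed job is compatible with that job's release date and due date, so the block can be positioned within $[r_t,d_t]$.

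The crux, and the step I expect to be the main obstacle, is verifying that this placement is simultaneously non-overlapping and deadline-feasible. For non-overlap I would use the volume constraints \eqref{eq:4}: for every window $[a,b]$, the total processing time of jobs whose assigned interval lies inside $[a,b]$, plus the full width $b-a$ charged by any spanning job strictly containing $[a,b]$, is at most $b-a$; an induction over the elementary intervals then shows that the left-to-right packing never exceeds the capacity of a window and, in particular, never pushes a job across a landmark that serves as its due date. The delicate aspect is the interaction between spanning jobs and short jobs together with the boundary behaviour at landmarks of equal value, which is precisely why $\mathcal{L}$ orders a due date before a release date of the same value; I expect to need this tie-breaking to argue that a job kept inside its window indeed starts no earlier than $r_t$ and finishes no later than $d_t$. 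Once non-overlap and deadline-feasibility are established, all $\sum_{t\in\mathcal{T}} x^t$ selected jobs are early, so the schedule has weighted early weight at least $W$, completing the proof. Finally, \cref{obs:releaseduedate} may be invoked to halve the case analysis by treating the release-date and due-date boundaries symmetrically.
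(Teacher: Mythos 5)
Your proposal follows essentially the same route as the paper: select the $x^t$ heaviest jobs of each type (so the realized weight matches the objective \eqref{eq:obj}), and realize the MILP solution by a greedy left-to-right packing over the landmarks of $\mathcal{L}$, whose non-overlap, release-date feasibility, and earliness are certified by constraints \eqref{eq:1}--\eqref{eq:4}, with \eqref{eq:4} carrying the crucial capacity/deadline argument. The one point where your plan is more fragile than the paper's is the order of placement: you fix the spanning jobs as rigid blocks first and then pack the short jobs around them (which can fail if a spanning block is positioned so as to fragment the space needed by the short jobs of its first elementary interval), whereas the paper interleaves the two in a single sweep with a running pointer $s$ -- for each $a$, the short jobs of the elementary interval starting at $a$ are packed before the spanning job that starts after $a$ -- which is precisely what makes the volume bound from \eqref{eq:4} directly yield feasible positions.
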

\begin{proof}
  Suppose we are given a solution to the MILP instance with objective value $W$.
  We create a schedule $\sigma$ as follows.

  We iterate through pairs $a,b\in\mathcal{L}$ with $a<b$ as follows.
  We start with the smallest element $a\in\mathcal{L}$ and the smallest element $b\in\mathcal{L}$ with $a<b$ according to the ordering.
  We maintain a current starting point $s$ that is initially set to $s=a$.
  Furthermore, we consider all jobs initially as ``unscheduled''.
  We proceed as follows.
  \begin{itemize}
    \item We iterate through all $t\in\mathcal{T}$ in some arbitrary but fixed way.
      If $x^t_{a,b}>0$, take the $x^t_{a,b}$ unscheduled jobs of type $t$ with the maximum weight and schedule those between $s$ and $s+x^t_{a,b}\cdot p_t$, where $p_t$ is the processing time of jobs of type $t$.
      Now consider those jobs scheduled and set $s\leftarrow s+x^t_{a,b}\cdot p_t$.
      Continue with the next type.
    \item Replace $b$ with the next-larger element in $\mathcal{L}$.
      If $b$ is the largest element in $\mathcal{L}$, then replace $a$ with the next-larger element in $\mathcal{L}$, set $b$ to the smallest element $b\in\mathcal{L}$ with $a<b$ according to the ordering, and set $s\leftarrow \max\{a,s\}$.
      If $a$ is the largest element of $\mathcal{L}$, terminate the process.
      Otherwise, go to the first step.
  \end{itemize}
  Since the solution to the MILP obeys constraint \eqref{eq:1}, we have that there are sufficiently many jobs of each type that can be scheduled.
  All jobs that remained unscheduled after the above-described procedure are scheduled in some arbitrary but feasible way. 
        
  We claim that the above procedure produces a schedule where the weighted number of early jobs is at least $W$.
  We start by showing that the procedure indeed produces a schedule.
  Clearly, there are no two jobs in conflict in the produced schedule.
  Furthermore, since we always have $s\ge a$ and since the solution to the MILP obeys constraints \eqref{eq:2}, we have that no job is scheduled to start before their release date.

  In the remainder of the proof, we show that we schedule $x^t=\sum_{a,b\in\mathcal{L}; a< b} x^t_{a,b}$ jobs of type $t$ early.
  In particular, we schedule the $x^t$ jobs of type $t$ with the largest weights early.
  As the solution to the MILP has objective value $W$, it follows that the total weight of early jobs is at least $W$.

  We show that all jobs scheduled in the first step of the above-described procedure are early.
  To this end, we identify where the procedure inserts idle times and argue that all jobs scheduled in the first step between two consecutive idle times by the procedure are early.
  An idle time is inserted by the procedure whenever we set $s\leftarrow \max\{a,s\}$ and we have $a>s$.
  Consider the case where we set $s\leftarrow \max\{a,s\}=a$ for some $a\in\mathcal{L}$.
  (Note that, for technical reasons, this includes the case where $a=s$.)
  Let $a'\in\mathcal{L}$ denote the next larger element of~$\mathcal{L}$ for which we set $s\leftarrow \max\{a',s\}=a'$.
  Now suppose, for the sake of contradiction, that there is a job with a starting time between $a$ and $a'$ that is scheduled in the first step of the procedure and is tardy.
  Let $j$ be the first such job, that is, the one with the smallest starting time. Let $x_{a'',b}^t\ge 1$ with some $a\le a''\le a'$ be the variable that was considered by the procedure when~$j$ was scheduled.
  Then we have that $d_j\ge b$, since otherwise constraints~\eqref{eq:2} are not met.
  We make the following case distinction:
  \begin{enumerate}
    \item If $a=a''$, then the completion time of $j$ is at most \mbox{$a+\sum_{t\in\mathcal{T}}\sum_{b'\in\mathcal{L}\mid b'\le b,a<b'}p_t \cdot x_{a,b'}^t$}. 
      However, since constraints~\eqref{eq:4} are met by the solution to the MILP, in particular the one for $a,b\in\mathcal{L}$, we have that
      \begin{equation*}
        a+\sum_{t\in\mathcal{T}}\sum_{b'\in\mathcal{L}\mid b'\le b,a<b'}p_t \cdot x_{a,b'}^t\le b \enspace .
      \end{equation*}
      Since $b\le d_j$, this contradicts the assumption that $j$ is tardy.
    \item Assume that $a<a''\le a'$.
      We argue that in this case, for all $x_{a''',b'}^{t'}$ with $t'\in\mathcal{T}$ and $a''',b'\in\mathcal{L}$ with $a\le a'''<a''$ and $b'>b$ we must have that $x_{a''',b'}^{t'}=0$.
      Assume that $x_{a''',b'}^{t'}$ for some $t'\in\mathcal{T}$ and $a''',b'\in\mathcal{L}$ with $a\le a'''<a''$ and $b'>b$.
      Consider the constraint~\eqref{eq:4} for $a'',b\in\mathcal{L}$.
      Then we have
      \begin{equation*}
        \sum_{t\in \mathcal{T}}\left(\sum_{a',b'\in\mathcal{L}\mid a'\ge a'',b'\le b,a'<b'} p_t\cdot x^t_{a',b'} + \sum_{a',b'\in\mathcal{L}\mid a'<a'',b'>b}(b-a'')\cdot x^t_{a',b'}\right) \le b-a'' \enspace .
      \end{equation*}
      However, we also have that
      \begin{align*}
        b-a'' & < p_t\cdot x_{a'',b}^t+(b-a'')\cdot x_{a''',b'}^{t'}\\
              & \le \sum_{t\in \mathcal{T}}\left(\sum_{a',b'\in\mathcal{L}\mid a'\ge a'',b'\le b,a'<b'} p_t\cdot x^t_{a',b'} + \sum_{a',b'\in\mathcal{L}\mid a'<a'',b'>b}(b-a'')\cdot x^t_{a',b'}\right),
      \end{align*}
      contradicting the assumption that $x_{a''',b'}^{t'}\ge 1$.
      It follows that the completion time of job~$j$ is at most $a+ \sum_{t\in \mathcal{T}}\sum_{a',b'\in\mathcal{L}\mid a'\ge a,b'\le b,a'<b'} p_t\cdot x^t_{a',b'}$.
      Since constraint~\eqref{eq:4} is met for $a,b\in\mathcal{L}$, we have that
      \begin{equation*}
        a+ \sum_{t\in \mathcal{T}}\sum_{a',b'\in\mathcal{L}\mid a'\ge a,b'\le b,a'<b'} p_t\cdot x^t_{a',b'}\le b \enspace .
      \end{equation*}
      Since $b\le d_j$, this contradicts the assumption that $j$ is tardy.
  \end{enumerate}
  We conclude that all jobs scheduled by the above-described procedure in the first step are early.
  Since for every $a,b\in\mathcal{L}$ and $t\in \mathcal{T}$, the procedure schedules the $x_{a,b}^t$ jobs of type $t$ with the largest weights early, we have that the total weight of early jobs is at least $W$.
  This concludes the proof.
\end{proof}

Now we have all the pieces available that are needed to prove \cref{thm:fpt1}.

\begin{proof}[Proof of \cref{thm:fpt1}]
  \cref{thm:fpt1} follows directly from \cref{obs:vars}, \cref{lem:corr1,lem:corr2}, and \cref{thm:micp}.
\end{proof}
    
\section{{\boldmath \Prob} parameterized by {\boldmath $\np+\nw+\nd$}}
\label{sec:pwd}
In this section, we present the following result.
\begin{theorem}
\label{thm:fpt(pwd)}
  The problem \Prob is fixed-parameter tractable when parameterized by \mbox{$\np+\nw+\nr$} or when parameterized by \mbox{$\np+\nw+\nd$}.
\end{theorem}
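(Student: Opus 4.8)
The plan is to again reduce to \textsc{MILP} and invoke \cref{thm:micp}, but with a formulation whose number of integer variables is bounded by a function of $\np+\nw+\nr$ alone. First I would apply \cref{obs:releaseduedate}: the release-date/due-date swap exchanges the roles of $\nr$ and $\nd$ while preserving $\np$ and $\nw$, so fixed-parameter tractability for $\np+\nw+\nr$ and for $\np+\nw+\nd$ are equivalent. Hence it suffices to treat the case in which $\np$, $\nw$, and the number $\nr$ of release dates are bounded while the due dates are arbitrary, and the other parameterization follows immediately.

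For this case I group the jobs into types by the triple $(p_j,w_j,r_j)$, so there are at most $\np\cdot\nw\cdot\nr$ types and, within a single type, jobs differ only in their due dates. The first gain over \cref{thm:fpt1} is that, since the weight is now part of the type, every job of a given type carries the same weight: the objective becomes the \emph{linear} function $\sum_t w_t\cdot c_t$, where $c_t$ counts the type-$t$ jobs made early, and we avoid the convex-objective conversion used in \cref{lem:corr1,lem:corr2}. The key structural step is to argue that there is an optimal schedule in canonical form: every maximal busy period begins at one of the $\nr$ release dates, jobs inside a busy period are run without idle time, and at each release date at most one job's execution interval crosses it (the machine processes one job at a time). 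A left-shifting argument forces each busy period to start at a release date, and an exchange argument orders the jobs inside a busy period essentially by earliest due date, subject to their at most $\nr$ distinct release dates. Consequently there are at most $\nr$ busy periods and at most $\nr$ crossing jobs, and I can introduce integer variables counting, for each of the $\le \nr$ release-date intervals (and each crossing position) and each type, how many type-$t$ jobs are scheduled there. This yields only $O(\np\cdot\nw\cdot\nr^2)$ integer variables, as required, while linear constraints analogous to \eqref{eq:1}--\eqref{eq:4} enforce availability of jobs, respect of release dates, the single-crossing condition, and the per-interval processing-capacity bound.

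The main obstacle is encoding \emph{due-date} feasibility, because the due dates are arbitrary and therefore cannot serve as interval breakpoints as they did in \cref{thm:fpt1}. A count vector must be realizable by an actual assignment of concrete jobs (with their specific due dates) to busy periods such that, when each busy period is run in earliest-due-date order, every selected job completes by its due date. I would capture this by sorting the jobs of each type by due date and imposing, for every due-date threshold and every prefix of release-date breakpoints, a Hall-type linear inequality stating that the number of selected jobs forced to finish before that threshold never exceeds the total processing capacity available before it. These inequalities are linear in the few integer count variables, and although there may be polynomially many of them, this is harmless for \cref{thm:micp}.

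Proving that these Hall-type conditions are both necessary and sufficient for a feasible earliest-due-date realization—in particular handling the crossing jobs and the interaction between the several release-date groups that may coexist inside one busy period—is the delicate part, and I expect it to be the crux of the argument. The correctness proof would mirror \cref{lem:corr1,lem:corr2}: from a feasible \textsc{MILP} solution I construct a schedule by filling the intervals type-by-type in earliest-due-date order, and conversely I read off a feasible count vector from an optimal canonical schedule. Given the structural form and this encoding, \cref{thm:micp} yields the claimed fixed-parameter algorithm for $\np+\nw+\nr$, and \cref{obs:releaseduedate} transfers it to $\np+\nw+\nd$.
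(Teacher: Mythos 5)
Your overall architecture is the mirror image of the paper's and is sound at that level: the paper handles the $\np+\nw+\nd$ case directly (types $(p_j,w_j,d_j)$, arbitrary release dates) and transfers to $\np+\nw+\nr$ via \cref{obs:releaseduedate}, whereas you work in the reflected instance; your enumeration of crossing structures, per-interval integer count variables, and Hall-type capacity conditions for the unbounded date dimension all correspond to the paper's overlap-structure enumeration, the variables $x_t^\ell$, and constraints~\eqref{con:feasPerRelease}. The genuine gap sits exactly where you say you ``expect the crux'' to be: you never prove that your Hall-type conditions are \emph{sufficient}, i.e., that a feasible MILP solution can be turned into an actual schedule. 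That direction is the heart of the theorem; without it the MILP value is only an upper bound on the optimum and no algorithm follows. In the paper it is carried by \cref{lem:feasibilitySingleDeadline} (a short induction for a common due date: schedule a latest-released job last and recurse) combined with the iterative peeling argument of \cref{lem:MILP<OPT}, which schedules the last inter-date interval, removes those jobs and the crossing job, and recurses on a residual MILP solution. Some statement and proof of this realizability step must be supplied for the argument to close.

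A second, fixable but real, issue is your claim that the Hall inequalities are ``linear in the few integer count variables.'' With pure counts $c_t$ and the canonical rule of selecting the most flexible jobs of each type, the quantity ``number of selected type-$t$ jobs forced to finish before threshold $D$'' equals $\max\{0,\, c_t - n_t^{>D}\}$, where $n_t^{>D}$ is the number of type-$t$ jobs with due date exceeding $D$; a sum of such positive parts on the left of a $\le$ constraint is piecewise linear, not linear, so you need either auxiliary continuous variables to linearize it or a different formulation. The paper sidesteps this by keeping a \emph{fractional} selection variable $x_j\in[0,1]$ for every job (only the interval counts are integral), writing constraints~\eqref{con:feasPerRelease} and~\eqref{con:jobCount} linearly in the $x_j$, and then proving a rounding lemma: within a type, fractional mass can be shifted to the job with the more permissive date because that job occurs in a subset of the constraints. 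You should adopt one of these two devices explicitly; as written, the MILP you describe is not an MILP.
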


We show the second part of \cref{thm:fpt(pwd)}, that is, \Prob is fixed-parameter tractable when parameterized by $\np+\nw+\nd$.
By \cref{obs:releaseduedate}, from this immediately follows that \Prob is also fixed-parameter tractable when parameterized by $\np+\nw+\nr$.
To prove the second part of \cref{thm:fpt(pwd)}, present a reduction from \Prob to MILP.
Given an instance of \Prob, we create a number of instances of MILP where in each of them, the number of integer variables is upper-bounded by a function of $\np$, $\nw$, and~$\nd$.
We solve each instance using \cref{thm:micp} and prove that the \Prob instance admits a schedule where the weighted number of early jobs is at least $W$ if and only if one of the generated instances admits a solution with objective value at least $W$.
Furthermore, we can upper-bound the number of created MILP instances by a function of $\np$, $\nw$, and $\nd$. 

Given an instance of \Prob, we say that two jobs $j$ and $j'$ have the same type if they have the same processing time, the same weight, and the same due date, that is, $p_j=p_{j'}$, $w_j=w_{j'}$, and $d_j=d_{j'}$.
Let $\mathcal T$ denote the set of all types.
Note that we have $|\mathcal{T}|\le \np\cdot\nw\cdot\nd$.
For some $r \in \mathbb{N}_0$ and some $t\in \mathcal{T}$ we denote by $t(r)$ the set of jobs with type $t$ whose release date is at least $r$.
Further, we denote by $p_t$ the processing time of jobs with type $t$.
Let $d_1,\dots, d_{\nd}$ be the sorted sequence of due dates.
We denote with $d_\ell$ the $\ell^{\textnormal{th}}$ due date and we use $d_j$ to denote the due date of job $j$ (same with release dates). 
To keep the notation concise we will use $d_0 = 0$ occasionally.
    
We fix some optimal schedule $\sigma: \{1,\ldots,n\} \to \mathbb{N}$ for the instance so that we may guess some part of it by enumeration.
If for a job $j$ and a due date $d_\ell$ we have $\sigma(j) < d_\ell < \sigma(j) + p_j$, we will say that the job \emph{overlaps} the due date.
Notice that in any schedule, any due date is overlapped by at most one job, but a job may overlap multiple due dates.
    
We now want to enumerate all possible ways due dates can be overlapped by early jobs from some type $t\in\mathcal{T}$ in $\sigma$.
That is, we consider all ways to partition $d_1,\dots, d_{\nd}$ into subsequences of consecutive due dates. There are $2^{d_\#}$ such partitions.
For each such subsequence $S$ we consider all job types that might be scheduled overlapping all due dates in~$S$. 
For the subsequences containing only a single due date, we also consider the case that no job overlaps that due date.
This gives $\np \cdot \nw \cdot \nd + 1$ choices for each subsequence, of which there are~$\nd$.
Thus we end up with at most $2^{\nd} \cdot \nd^{\np \cdot \nw \cdot \nd + 1}$ possible \emph{overlap structures} to consider.
By enumerating them it is now possible to assume that we know which overlap structure is present in $\sigma$. 

We make a small simplification at this stage. 
Suppose we know that some sequence of due dates $d_a,d_{a+1}, \ldots ,d_b$ is overlapped by the same job.
Then $\sigma$ schedules every job with one of these due dates such that it is either scheduled to end before $d_a$, or it is late.
Therefore, we can decrease the due date of such jobs to be $d_a$ without changing the optimality of $\sigma$. 
We may thus assume that every job overlaps at most one due date.
So suppose that for each $d_\ell$ we are given the job type $T(d_\ell) = t$  overlapping that due date, or an assertion that no job overlaps $d_\ell$, represented by $T(d_\ell) = \emptyset$. 
If the due date of jobs of type $T(d_\ell)$ is at most $d_\ell$ we can reject the arrangement immediately, so assume that the due date of jobs of type $T(d_\ell)$ is larger than $d_\ell$.
We can formulate a MILP that computes an optimal schedule under the constraint that this structure of overlaps is respected.
It uses the following variables:
\begin{enumerate}
  \item $x_t^{\ell} \in \mathbb{N}_0$ counts the number of jobs of type $t$ to be scheduled between $d_{\ell-1}$ and~$d_\ell$. \label{var:interiorJobs}
  \item $o_a^\ell, o_b^\ell\in \mathbb{N}_0$ are the portions of the job scheduled overlapping $d_\ell$ that is processed before and after $d_\ell$, respectively. \label{var:overlaps}
  \item $x_j \in [0,1]$ indicates whether job $j$ is scheduled. These variables are fractional, but we will be able to show that they can be rounded.
\end{enumerate}
We could omit generating variables $x_t^\ell$ that are known to schedule jobs late, that is, those where jobs of type $t$ have a due date that is earlier than $d_\ell$. 
For simplicity, we keep these variables, but one can assume them to be set to $0$.
    
The MILP needs the following sets of constraints.
The first constraint sets handle the overlaps around due dates:
\begin{align}
  \forall \ell\in\{1,\ldots,\nd-1\} &: o_a^\ell+ o_b^\ell = p_t, \text{ if } T(d_\ell) = t. \label{con:overlapLength}\\
  \forall \ell\in\{1,\ldots,\nd-1\} &: o_a^\ell+ o_b^\ell = 0, \text{ if } T(d_\ell) = \emptyset. \label{con:overlap0}\\
  \forall \ell\in\{1,\ldots,\nd-1\} &: d_\ell + o_b^\ell \leq d_{\ell+1}. \label{con:overlapFeasR}\\
  \forall \ell\in\{1,\ldots,\nd-1\} &: d_{\ell-1} + o_a^\ell \leq d_{\ell} \enspace .\label{con:overlapFeasL} 
\end{align}
The next set of constraints ensures that we do not try to schedule more jobs of a certain type than there are available:
\begin{equation}
\label{con:jobCount}
  \forall t\in\mathcal{T} : \sum_{j \text{ has type } t} x_j = \sum_{\ell\in\{1,\ldots,\nd\}}x_t^\ell + |\{d_\ell \mid T(d_\ell) = t\}| \enspace .
\end{equation}

The next two sets of constraints, intuitively, ensure that we respect the release dates of the jobs, and that we do not schedule too many jobs between two consecutive due dates.
\begin{align}
  & \forall \ell\in\{1,\ldots,\nd\}, \ell'\in\{1,\ldots,\nr\} \text{ with } r_{\ell'}\le d_\ell \nonumber \\
  &    o_a^\ell + \sum_{t\in\mathcal{T}}p_t\cdot \left(\sum_{j \in t(r_{\ell'})} x_j - \sum_{\ell'' > \ell} x_t^{\ell''} -  |\{d_{\ell''} \mid T(d_{\ell''}) = t , \ell'' \geq \ell\}|\right)\leq d_\ell - r_{\ell'} \enspace . \label{con:feasPerRelease}
\end{align}
\begin{equation}
\label{con:feasTotal}
  \forall \ell\in\{1,\ldots,\nd\} : o_a^\ell + o_b^{\ell-1} + \sum_{t\in\mathcal{T}}p_t\cdot x_t^\ell \leq d_\ell - d_{\ell-1} \enspace .
\end{equation}

The objective function (to be maximized) of the MILP is simply 
\begin{equation}
\label{objective}
  \sum_j w_j\cdot x_j \enspace .
\end{equation}
We observe the following:
\begin{observation}
\label{obs:instvars}
  The number of created MILP instances is in $O(2^{\nd} \cdot \nd^{\np \cdot \nw \cdot \nd + 1})$ and the number of integer variables in each instance is in $O(\np \cdot \nw \cdot \nd^2)$.
\end{observation}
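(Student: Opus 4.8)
The plan is to prove \cref{obs:instvars} by direct counting, since the observation merely collects the enumeration bound and the variable bound that are implicit in the preceding construction. There is no real combinatorial difficulty here; the work is bookkeeping, and the only care needed is in (a) counting the right family of partitions and (b) correctly distinguishing integer from fractional variables.

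First I would bound the number of MILP instances. Each instance is in bijection with one overlap structure, and an overlap structure is specified by two independent choices: a partition of the sorted due dates $d_1,\dots,d_{\nd}$ into blocks of \emph{consecutive} due dates, and, for each block, a label giving the job type overlapping all due dates of that block (with the additional ``no overlap'' label permitted for singleton blocks). A partition into consecutive blocks is determined by deciding, for each of the $\nd-1$ gaps between adjacent due dates, whether to cut there, so there are exactly $2^{\nd-1}=O(2^{\nd})$ such partitions (this is where one must resist counting all $\Theta(\nd^{\nd})$ set partitions). Within any fixed partition there are at most $\nd$ blocks, and each block receives one of at most $|\mathcal{T}|+1\le \np\cdot\nw\cdot\nd+1$ labels, giving at most $(\np\cdot\nw\cdot\nd+1)^{\nd}$ labelings. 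Multiplying the two counts yields $O\!\left(2^{\nd}\cdot(\np\cdot\nw\cdot\nd+1)^{\nd}\right)$ overlap structures, which is the claimed bound of $O(2^{\nd}\cdot\nd^{\np\cdot\nw\cdot\nd+1})$ up to the way the product of the two factors is written; in either form it is a computable function of $\np+\nw+\nd$ alone, which is all that is needed downstream.

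Second, I would bound the number of integer variables in a single MILP. By the declaration of the variables in items~\ref{var:interiorJobs} and~\ref{var:overlaps}, the integer variables are exactly the counting variables $x_t^{\ell}$, of which there are $|\mathcal{T}|\cdot\nd\le \np\cdot\nw\cdot\nd\cdot\nd=\np\cdot\nw\cdot\nd^2$, together with the overlap-portion variables $o_a^{\ell}$ and $o_b^{\ell}$, of which there are $2(\nd-1)=O(\nd)$. The indicator variables $x_j\in[0,1]$ are fractional and therefore do not contribute. Summing gives $O(\np\cdot\nw\cdot\nd^2)+O(\nd)=O(\np\cdot\nw\cdot\nd^2)$ integer variables, as asserted.

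As noted, I do not expect a genuine obstacle, since both parts reduce to elementary counting. The two points that require attention are that the relevant partitions are into \emph{consecutive} blocks (hence $2^{\nd-1}$ rather than a Bell-number count), and that the fractional $x_j$ must be excluded from the integer-variable tally so that the $x_t^{\ell}$ variables, of which there are $O(\np\cdot\nw\cdot\nd^2)$, dominate.
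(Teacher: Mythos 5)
Your proof is correct and takes essentially the same route as the paper's own (implicit) counting: the instance bound comes from the same two independent choices (a partition of the due dates into consecutive blocks, times a per-block type label with the extra ``no overlap'' option), and the integer-variable bound from tallying the $x_t^{\ell}$ and $o_a^{\ell},o_b^{\ell}$ variables while excluding the fractional $x_j$. You are also right to flag that the literal expression $\nd^{\np\cdot\nw\cdot\nd+1}$ has base and exponent swapped relative to the natural count $(\np\cdot\nw\cdot\nd+1)^{\nd}$; since the latter is at most the former for $\nd\ge 3$ and both are computable functions of $\np+\nw+\nd$, nothing downstream is affected.
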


Next, we show the correctness of the reduction.
\begin{lemma}
\label{lem:MILP>OPT}
  If the \Prob instance admits a schedule where the weighted number of early jobs is~$W$, then one of the created MILP instances admits a feasible solution that has objective value at least~$W$.
\end{lemma}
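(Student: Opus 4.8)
The plan is to take the given schedule $\sigma$ with weighted number of early jobs $W$, read off the overlap structure it induces, select the single MILP instance that was built for that structure, and construct from $\sigma$ an explicit feasible solution of objective value exactly $W$. First I would fix the overlap structure of $\sigma$ by grouping maximal runs of consecutive due dates that are overlapped by one and the same early job, recording for each run the type of that job (and $\emptyset$ for the unoverlapped due dates); this is precisely one of the enumerated structures, so the corresponding MILP instance is among those created. Invoking the simplification stated just before the MILP (shrinking the due date of every job due inside such a run down to the run's left endpoint) preserves the value of $\sigma$ and lets me assume that every job overlaps at most one due date, so that the map $T$ is well defined and each early job is either interior to exactly one interval $(d_{\ell-1},d_\ell)$ or overlaps exactly one due date.

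Next I would define the candidate solution directly from $\sigma$: set $x_j=1$ for every early job $j$ and $x_j=0$ otherwise; for each due date $d_\ell$ overlapped by its (unique) early job $j^\star$ of type $t$, set $o_a^\ell=d_\ell-\sigma(j^\star)$ and $o_b^\ell=\sigma(j^\star)+p_{j^\star}-d_\ell$, and set both to $0$ when $T(d_\ell)=\emptyset$; and let $x_t^\ell$ count the early type-$t$ jobs lying entirely in $(d_{\ell-1},d_\ell)$. All these values are integers and the $x_j$ lie in $[0,1]$. The constraints encoding only local feasibility are then routine: \eqref{con:overlapLength} and \eqref{con:overlap0} hold by the choice of $o_a^\ell,o_b^\ell$; \eqref{con:overlapFeasR} and \eqref{con:overlapFeasL} follow because, as $j^\star$ overlaps only $d_\ell$, it starts in $[d_{\ell-1},d_\ell)$ and finishes in $(d_\ell,d_{\ell+1}]$; \eqref{con:jobCount} holds because every early job is counted exactly once, either in some $x_t^\ell$ or in the overlap term; and \eqref{con:feasTotal} holds since $o_b^{\ell-1}$, the interior jobs counted by $x_t^\ell$, and $o_a^\ell$ are processing-time contributions of pairwise distinct jobs all lying inside $(d_{\ell-1},d_\ell)$, which the machine processes serially (with the boundary term vanishing).

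The main obstacle is constraint \eqref{con:feasPerRelease}, which I would establish by reading its left-hand side as a lower bound on the machine workload inside $[r_{\ell'},d_\ell]$. The crucial fact is that every early job released at or after $r_{\ell'}$ starts at or after $r_{\ell'}$; classifying such a type-$t$ job by where $\sigma$ places it shows that $\sum_{j\in t(r_{\ell'})}x_j$ minus the two subtracted counts is at most the number of these jobs that finish by $d_\ell$ without overlapping $d_\ell$, and all of those are processed entirely within $[r_{\ell'},d_\ell]$. Here the at-most-one-overlap simplification is exactly what guarantees that a job overlapping a due date strictly before $d_\ell$ still finishes by $d_\ell$, and the fact that the subtracted counts range over \emph{all} type-$t$ jobs (not only those released late) only makes the bound looser. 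The one delicate case is the job $j^\star$ overlapping $d_\ell$ itself: if $j^\star$ is released at or after $r_{\ell'}$, its in-interval prefix equals $o_a^\ell$ and it is removed from the count by the overlap term, so the separately added $o_a^\ell$ matches its true contribution; if $j^\star$ is released before $r_{\ell'}$, it is subtracted but never added, creating slack $p_{T(d_\ell)}\ge o_a^\ell$ that absorbs the added $o_a^\ell$. In both cases the left-hand side is at most the total work performed in $[r_{\ell'},d_\ell]$, which is at most $d_\ell-r_{\ell'}$.

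Finally, the objective \eqref{objective} evaluates to $\sum_j w_j x_j$, which equals the total weight of the early jobs of $\sigma$, namely $W$; hence the selected MILP instance is feasible with objective value exactly $W$, and in particular at least $W$, as claimed.
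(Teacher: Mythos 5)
Your proposal is correct and follows essentially the same route as the paper's proof: select the MILP instance matching the overlap structure of $\sigma$, set $x_j$, $o_a^\ell$, $o_b^\ell$, and $x_t^\ell$ exactly as you describe, and verify each constraint by interpreting its left-hand side as (a lower bound on) the workload the machine performs in the corresponding interval. Your case analysis for constraint~\eqref{con:feasPerRelease} (distinguishing whether the job overlapping $d_\ell$ is released before or after $r_{\ell'}$) is in fact more careful than the paper's brief treatment of that constraint, but it is the same underlying argument.
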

\begin{proof}
  Suppose that we are given a schedule $\sigma$ for the \Prob instance where the weighted number of early jobs is $W$.
  Consider the MILP instance that is generated using the overlap structure of~$\sigma$.
  We create a feasible solution for this MILP instance as follows.
    
  We set $x_j = 1$ if job $j$ is early in $\sigma$, and $0$ otherwise. 
  For every due date $d_\ell$, if there exists a job~$j$ overlapping that due date we set $o^\ell_a = d_\ell - \sigma(j)$ and $o^\ell_b = p_j - o^\ell_a$.
  The variables~$x_t^\ell$ we set to the number of jobs of type $t$ that $\sigma$ schedules such that both their start and completion time are between~$d_{\ell-1}$ and~$d_\ell$.

  If we are using the correct overlap structure for the MILP, we know that this fulfills constraints \eqref{con:overlapLength}, \eqref{con:overlap0}, \eqref{con:overlapFeasR}, and \eqref{con:overlapFeasL}.
  Constraint~\eqref{con:jobCount} is also trivially fulfilled since this is merely counting the number of early jobs in two different ways.
  For constraint~\eqref{con:feasTotal} we observe that the left-hand side counts the total length of (parts of) jobs scheduled by $\sigma$ to run between $d_{\ell-1}$ and $d_\ell$, which must be less than $d_\ell - d_{\ell-1}$ since the jobs do not overlap.
  We can use a similar argument to see that constraint~\eqref{con:feasPerRelease} must be fulfilled.

  Note that $\sum_{t\in\mathcal{T}} p_t \cdot \sum_{j \in t(r_{\ell'})} x_j$ is at most the total length of early jobs scheduled by $\sigma$ after $r_{\ell'}$.
  Meanwhile, $\sum_{t\in\mathcal{T}} p_t \cdot (\sum_{\ell'' > \ell} x_{t}^{\ell''} -  |\{d_{\ell''} \mid T(d_{\ell''}) = t , \ell'' \geq \ell\}|)$ is the total length of early jobs scheduled after $d_\ell$, or overlapping $d_\ell$.
  So we see that the total length of parts of jobs running after $d_\ell$ in $\sigma$ is $\sum_{t\in\mathcal{T}} p_t \cdot (\sum_{\ell'' > \ell} x_{t}^{\ell''} -  |\{d_{\ell''} \mid T(d_{\ell''}) = t , \ell'' \geq \ell\}|) - o^\ell_a$.
  Thus the left-hand side of constraint~\eqref{con:feasPerRelease} is at most the length of parts of jobs scheduled by $\sigma$ to run between $r_{\ell'}$ and $d_\ell$, and is therefore at most $d_\ell - r_{\ell'}$.
\end{proof}

It remains to show that we can construct a schedule from a solution to the MILP.
We will first show some auxiliary result for the case of a single due date.
\begin{lemma}
\label{lem:feasibilitySingleDeadline}
  Consider any instance $I$ of \Prob with a common due date $d$. 
  If for all release dates $r_j$ we have
  \begin{equation*}
    \sum_{j'\in\mathcal I \mid r_{j'} \geq r_j} p_{j'} \leq d-r_j \enspace .
  \end{equation*}
  then we can schedule all jobs early.
\end{lemma}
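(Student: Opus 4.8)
The plan is to reduce to the classical single-machine feasibility problem with a common release date via \cref{obs:releaseduedate}, and then apply the Earliest-Due-Date (EDD) rule.

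First I would invoke \cref{obs:releaseduedate} with $d_{\max}=d$, the common due date. This produces an equivalent instance $I'$ in which every job has release date $r'_j = 0$ and due date $d'_j = d - r_j$. Thus all jobs of $I'$ are available at time $0$, and by the Observation it suffices to schedule them so that each job $j$ completes by $d'_j$: scheduling all jobs early in $I'$ is equivalent to scheduling all jobs early in $I$. Next I would translate the hypothesis into the standard feasibility condition. Since $d'_{j'} \le d'_j$ holds exactly when $r_{j'} \ge r_j$, the assumption $\sum_{j' \mid r_{j'} \ge r_j} p_{j'} \le d - r_j$ is precisely $\sum_{j' \mid d'_{j'} \le d'_j} p_{j'} \le d'_j$ for every job $j$; that is, for every due-date value $D$ occurring in $I'$, the total processing time of jobs with due date at most $D$ is at most $D$.

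Then I would construct the schedule by the EDD rule: order the jobs of $I'$ by non-decreasing $d'_j$ (breaking ties arbitrarily) and schedule them back-to-back without idle time, starting at time $0$. For any job $j$, its completion time $C_j$ equals the total processing time of the jobs scheduled no later than $j$, all of which have due date at most $d'_j$ under this ordering; hence $C_j \le \sum_{j' \mid d'_{j'} \le d'_j} p_{j'} \le d'_j$ by the translated hypothesis. Therefore every job meets its due date in $I'$, and transporting this schedule back through \cref{obs:releaseduedate} (via $\sigma(j) = d - \sigma'(j) - p_j$) yields a schedule for $I$ in which all jobs are early and all release dates are respected.

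The argument is almost entirely bookkeeping; the one point that needs care is the translation step, namely verifying that the hypothesis — which a priori only constrains the cumulative load at release-date values — yields the per-job completion-time bound $C_j \le d'_j$ under the EDD ordering. Once the equivalence $d'_{j'} \le d'_j \Leftrightarrow r_{j'} \ge r_j$ is in hand, this bound is immediate, and the standard EDD feasibility argument completes the proof.
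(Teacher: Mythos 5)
Your proof is correct, but it is organized differently from the paper's. The paper argues directly on $I$ by induction on the number of jobs: it peels off a job $j^*$ with maximum release date, schedules it right-justified at $d-p_{j^*}$, and checks that the residual instance with common due date $d-p_{j^*}$ still satisfies the hypothesis. You instead pass through \cref{obs:releaseduedate} to the mirrored instance $I'$ with a common release date $0$ and due dates $d'_j=d-r_j$, observe that the hypothesis becomes the classical condition $\sum_{j'\mid d'_{j'}\le d'_j}p_{j'}\le d'_j$, and then invoke the standard EDD feasibility argument with an explicit completion-time bound. The two schedules produced are mirror images of one another (your EDD order on $I'$ is exactly the paper's latest-release-date-last, right-justified packing on $I$), but your route replaces the induction with a reduction to a textbook fact, which is arguably cleaner and makes the role of the hypothesis more transparent; the paper's induction is self-contained and avoids relying on \cref{obs:releaseduedate} inside a lemma that is itself used within the proof of \cref{lem:MILP<OPT}. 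One small point you handle correctly but should keep explicit: transporting the schedule back via $\sigma(j)=d-\sigma'(j)-p_j$ preserves earliness \emph{job by job} (not merely the aggregate weight $W$), which is what you need to conclude that \emph{all} jobs of $I$ are early; your appeal to the explicit per-job map rather than to the statement of \cref{obs:releaseduedate} alone is the right move here.
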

\begin{proof}
  The statement holds if the instance $I$ contains a single job.
  Otherwise, we apply induction on the number of jobs.
  Let $j^*$ be a job with maximum release date. 
  We schedule that job at time $d-p_{j^*}$.
  This yields a new instance $I'$ of \Prob with one fewer job and common due date $d-p_{j^*}$.
  For any of the release dates $r'_j$ of this new instance, it holds that 
  \begin{equation*}
    \sum_{j'\in I'\mid r_{j'} \geq r'_j} p_{j'} = \sum_{j'\in I' \mid r_{j'} \geq r'_j} p_{j'} - p_{j^*} \leq d-r'_j-p_{j^*} \enspace .
  \end{equation*}
  The lemma statement follows.
\end{proof}
\begin{lemma}
\label{lem:MILP<OPT}
  If one of the created MILP instances admits a solution that has objective value $W$, then the \Prob instance admits a schedule where the weighted number of early jobs is at least $W$.
\end{lemma}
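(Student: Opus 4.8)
The plan is to turn a feasible MILP solution of objective value $W$ into a schedule with weighted early profit at least $W$, in three stages: first make the indicator variables $x_j$ integral, then fix the positions of the overlapping jobs, and finally realise the interior jobs of each inter-due-date region by invoking \cref{lem:feasibilitySingleDeadline}.

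First I would argue that we may assume $x_j \in \{0,1\}$. Constraint~\eqref{con:jobCount} forces $\sum_{j \text{ of type } t} x_j$ to equal the integer $k_t := \sum_\ell x_t^\ell + |\{d_\ell \mid T(d_\ell)=t\}|$, so for each type $t$ the total mass assigned to the $x_j$ is integral. Since all jobs of a type share the same weight, the objective~\eqref{objective} depends only on these per-type totals; hence I would round by keeping, for each type $t$, exactly the $k_t$ jobs of smallest release date and discarding the rest. This leaves the objective unchanged and preserves every constraint: the only place the $x_j$ enter is the term $\sum_{j \in t(r_{\ell'})} x_j$ in~\eqref{con:feasPerRelease}, which has a nonnegative coefficient, and concentrating the chosen mass on the earliest release dates can only decrease $\sum_{j \in t(r_{\ell'})} x_j$ for every threshold $r_{\ell'}$ (a fractional solution of total mass $k_t$ always places at least as much mass above any threshold as the earliest-first integral choice does).

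Next I would lay down the skeleton of the schedule. For each due date $d_\ell$ with $T(d_\ell)=t\neq\emptyset$ I place one chosen job of type $t$ on the interval $[d_\ell - o_a^\ell,\, d_\ell + o_b^\ell]$; constraints~\eqref{con:overlapLength}, \eqref{con:overlapFeasR} and~\eqref{con:overlapFeasL} guarantee this interval has the right length and does not cross the neighbouring due dates, while~\eqref{con:overlap0} ensures unoverlapped due dates stay empty. These overlapping jobs carve the timeline into regions, where region $\ell$ is the free interval $[d_{\ell-1}+o_b^{\ell-1},\, d_\ell - o_a^\ell]$ between two consecutive overlaps. Into region $\ell$ I must pack $x_t^\ell$ interior jobs of each type $t$, all of which I may treat as sharing the common due date $d_\ell$ after the simplification that every job overlaps at most one due date.

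The crux is to verify that the interior jobs of each region can actually be scheduled early, which I would reduce to the precondition of \cref{lem:feasibilitySingleDeadline} applied to region $\ell$ with common due date $d_\ell - o_a^\ell$. For the trivial release-date threshold, constraint~\eqref{con:feasTotal} yields exactly $\sum_t p_t x_t^\ell \le (d_\ell - d_{\ell-1}) - o_a^\ell - o_b^{\ell-1}$, i.e.\ the jobs fit in the region. The genuine obstacle is the release-date condition: \cref{lem:feasibilitySingleDeadline} requires that within region $\ell$ the interior jobs released at or after any $r$ have total processing time at most $(d_\ell - o_a^\ell) - r$, whereas~\eqref{con:feasPerRelease} only bounds the total length of \emph{all} chosen jobs released after $r$ that run up to $d_\ell$, aggregated across regions. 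I would bridge this gap by fixing the assignment of concrete jobs to the $x_t^\ell$ slots so that, within each type, jobs of smaller release date go to earlier regions, and treating any job released before its region as released at the region's left endpoint; the per-region inequality would then be recovered from~\eqref{con:feasPerRelease} by a telescoping argument in which the contribution of earlier regions cancels against the subtracted term $\sum_{\ell''>\ell} x_t^{\ell''} + |\{d_{\ell''}\mid T(d_{\ell''})=t,\ \ell''\ge \ell\}|$. Matching the global accounting of~\eqref{con:feasPerRelease} to the local hypotheses of \cref{lem:feasibilitySingleDeadline} region by region is where essentially all the work lies; once it is in place, \cref{lem:feasibilitySingleDeadline} schedules each region's interior jobs early, all $\sum_t k_t w_t = W$ chosen jobs complete by their due dates, and the desired schedule follows.
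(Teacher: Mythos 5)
Your overall plan coincides with the paper's: round the $x_j$ by preferring jobs with earlier release dates within each type, place the overlap job for each $d_\ell$ on $[d_\ell-o_a^\ell,\,d_\ell+o_b^\ell]$, assign the interior jobs to regions monotonically in their release dates, and finish each region with \cref{lem:feasibilitySingleDeadline}, using \eqref{con:feasTotal} for the region's left endpoint. The rounding argument is sound and matches the paper. The problem is that the step you yourself describe as ``where essentially all the work lies'' --- deriving the per-region hypotheses of \cref{lem:feasibilitySingleDeadline} from \eqref{con:feasPerRelease} --- is asserted rather than proved, and the sketch of it is off in two ways. First, the subtracted term $\sum_{\ell''>\ell}x_t^{\ell''}+|\{d_{\ell''}\mid T(d_{\ell''})=t,\ \ell''\ge\ell\}|$ removes the jobs assigned to \emph{later} regions and to overlaps at or after $d_\ell$; the contribution of \emph{earlier} regions is not cancelled but survives as extra nonnegative mass on the left-hand side, which one then drops. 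Second, and more substantively, the bracketed quantity $\sum_{j\in t(r_{\ell'})}x_j-\sum_{\ell''>\ell}x_t^{\ell''}-|\{\cdots\}|$ is in general only a \emph{lower} bound on the number of chosen type-$t$ jobs that are released at or after $r_{\ell'}$ and placed in regions $1,\dots,\ell$: the subtraction also removes later-region and overlap jobs whose release dates lie \emph{below} $r_{\ell'}$, which were never counted in $t(r_{\ell'})$. To turn \eqref{con:feasPerRelease} into the \emph{upper} bound that \cref{lem:feasibilitySingleDeadline} needs, you must show that every type-$t$ job sent to a region after $\ell$, or used as an overlap at or after $d_\ell$, is itself released at or after $r_{\ell'}$; this is exactly what the release-monotone assignment has to deliver, and it must be argued jointly with the (unaddressed) choice of which concrete job realises each overlap and with a verification that that job's release date is at most $d_\ell-o_a^\ell$ --- a check your skeleton step skips entirely, and which the paper also extracts from \eqref{con:feasPerRelease}.

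The paper avoids this simultaneous global bookkeeping by peeling from the back: it schedules the $x_t^{\nd}$ latest-released chosen jobs of each type into the final region (where \eqref{con:feasPerRelease} has no subtracted terms), then places the overlap job at $d_{\nd-1}$ after explicitly checking its release date, then deletes these jobs and the last due date, argues that the updated variable assignment remains feasible for the residual MILP, and recurses. If you want to keep your direct per-region formulation instead, the implication from \eqref{con:feasPerRelease} to the hypotheses of \cref{lem:feasibilitySingleDeadline} for every region and every release-date threshold must actually be carried out; as written, that implication \emph{is} the content of the lemma, and it is missing.
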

\begin{proof}
  Assume we are given a feasible solution with objective value $W$ to one of the MILP instances.
  We will begin by rounding the $x_j$ such that they are integral. 
  Suppose there is some job $j$ with $x_j \in (0,1)$. 
  Due to constraint~\eqref{con:jobCount} there exists some other job $j'$ with the same type as $j$ and with $x_{j'}\in (0,1)$. 
  Assume, without loss of generality, that $j$ has an earlier release date than $j'$.
  Then we claim that setting $x_j := \min\{1, x_j + x_{j'}\}$ and $x_{j'} := \max\{0, x_{j'}+x_{j}-1\}$ maintains feasibility.
  Note that $x_j + x_{j'}$ does not change, $x_{j'}$ decreases, and in any constraint where $x_j$ occurs, $x_{j'}$ occurs also because $j$ has the earlier release date and they have the same type. 
  Therefore, the left-hand sides of constraint~\eqref{con:jobCount} are unchanged, and those of constraint~\eqref{con:feasPerRelease} do not increase, maintaining feasibility. 
  Since $j$ and $j'$ have the same type, the objective value is unchanged.
  By repeating this rounding operation, we can increase the number of integral $x_j$ until all of them are integral.
  The same argument can also be used to ensure that no job $j$ has $x_j=1$ if there is another job $j'$ with the same type and an earlier release date having $x_{j'} = 0$.

  We now show how to schedule every early job $j$ with $x_j = 1$. Note that this implies that the weighted number of early jobs is at least $W$.
  First, let $J_{t}^1,\dots, J_{t}^{r_\#}$ be the list of jobs with type $t$ sorted by their release dates, and omitting all jobs with $x_j = 0$. 
  It will suffice to prove that we can schedule the last $x_{t}^{\nd}$ jobs from each list into the interval $\mathcal{I} := [d_{\nd-1} + o_r^{\nd-1}, d_{\nd}]$, and that we can then schedule a job of class $S(d_{\nd-1})$ into $[d_{\nd-1} - o_a^{\nd-1}, d_{\nd-1} + o_b^{\nd-1}]$.
        
  Notice that constraint \eqref{con:feasPerRelease} simplifies for $d_{\nd}$ to be $\sum_{t\in\mathcal{T}} \sum_{j \in t(r_{\ell'})} p_t \cdot x_j \leq d_{\nd} - r_{\ell'}$.
  Further, for the purposes of scheduling into $\mathcal{I}$, we may consider the  $x_{t}^{\nd}$ jobs from each type~$t$ with the latest release dates to have release date at least $d_{\nd-1} + o_b^{\nd-1}$, which transforms constraint \eqref{con:feasTotal} into $\sum_{t\in\mathcal{T}} \sum_{j \in t(d_{\nd-1} + o_b^{\nd-1})} p_t\cdot x_j \leq d_{\nd} - d_{\nd-1} - o_b^{\nd-1}$.
  From \cref{lem:feasibilitySingleDeadline} we see that we can indeed schedule all the required jobs into $\mathcal{I}$.

  Remove all the jobs we just scheduled from the instance, and update the MILP solution by setting to $0$ the $x_{t}^{\nd}$, as well as all $x_j$ for the scheduled jobs.
  Now we need to schedule the job~$j^*$ with the latest due date from the jobs of type $T(d_{\nd-1})$ into $[d_{\nd-1} - o_a^{\nd-1}, d_{\nd-1} + o_b^{\nd-1}]$.
  By constraints \eqref{con:overlapLength}-\eqref{con:overlapFeasL} we see that the interval has the correct length for the job, and that the job will not be late. 
  We only need to ensure that it is not scheduled before its release date~$r_{j^*}$.
  We use constraint \eqref{con:feasPerRelease} to observe
  \begin{align*}
    & o_a^{\nd-1} + \sum_{t\in\mathcal{T}}p_t\cdot \left(\sum_{j \in t(r_{j^*})} x_j - \sum_{\ell'' > {\nd-1}} x_{t}^{\ell''} -  |\{d_{\ell''} \mid T(d_{\ell''}) = t , \ell'' \geq {\nd-1}\}|\right) \leq d_{\nd-1} - r_{j^*}\\
    &\implies r_{j^*} - p_{j^*} + \sum_{t\in\mathcal{T}}p_t\cdot \sum_{j \in t(r_{j^*})} x_j \leq d_{{\nd-1}} - o_a^{\nd-1}\\
    &\implies r_{j^*} \leq d_{{\nd-1}} - o_a^{\nd-1}.
  \end{align*}
  We see that $j^*$ can be scheduled as desired.
  Now we update the MILP solution again by setting $d_{\nd-1}$ to $d_{\nd-1} - o_a^{\nd-1}$, as well as $x_{j^*}$, $o_b^{\nd-1}$, and $o_a^{\nd-1}$ to $0$. We also update $T(d_{\nd-1}) := \emptyset$.
  This yields a feasible solution to the MILP of the residual instance where we discard the constraints for the final due date. 
  We can iterate this until all jobs $j$ with $x_j=1$ have been scheduled. We schedule the remaining jobs in an arbitrary but feasible way. Since the objective value of the solution for the MILP is $W$, we have that the weighted number of early jobs is also at least $W$.
\end{proof} 

Now we have all the pieces available that are needed to prove \cref{thm:fpt(pwd)}.

\begin{proof}[Proof of \cref{thm:fpt(pwd)}]
Between \cref{lem:MILP>OPT} and \cref{lem:MILP<OPT} we see that the MILP is equivalent to the original scheduling problem if the correct overlap structure is chosen. 
By \cref{obs:instvars} the number of MILPs we need to solve and the number of integer variables in each MILP depends only on $\np+\nw+\nd$, and thus \cref{thm:fpt(pwd)} follows from \cref{thm:micp} and \cref{obs:releaseduedate}.    
\end{proof}

\section{Unary {\boldmath \Prob} parameterized by {\boldmath $\nr$}}
\label{sec:r&d}

Recall that \Prob is \emph{weakly} $\mathsf{NP}$-hard in the case where there is only one due date and one release date.
Thus even for instances of \Prob with constant $\nr$ or $\nd$, we cannot expect a polynomial-time algorithm solving them in general.
However, we show in the following that such instances can be solved in \emph{pseudo-polynomial time}\footnote{A problem can be solved in \emph{pseudo-polynomial} time if it can be solved in polynomial time when all numeric values are encoded unarily.}, using dynamic programming, thus generalizing the folklore algorithm for {\sc Knapsack}. Formally, we prove the following:

\begin{theorem}\label{thm:XP}
  The problem \Prob is in $\mathsf{XP}$ when parameterized by $\nr$ or $\nd$ if all numbers are encoded in unary.
\end{theorem}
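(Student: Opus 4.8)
The plan is to give a pseudo-polynomial dynamic program for parameter $\nd$; the case of parameter $\nr$ then follows immediately from \cref{obs:releaseduedate}, which lets us interchange the roles of release dates and due dates. So fix the distinct due dates $d_1 < \dots < d_{\nd}$, set $d_0 = 0$, and let $P = \sum_j p_j$, which is polynomial in the input size once all numbers are encoded in unary.

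First I would isolate the genuinely non-preemptive obstruction. In any schedule, each of the $\nd-1$ internal due dates $d_1,\dots,d_{\nd-1}$ is \emph{spanned} (overlapped) by at most one early job, while no early job may run past $d_{\nd}$. I would therefore guess, for each internal due date $d_\ell$, the identity of the job spanning it (or that none does) together with the integer amount $o_a^\ell \in \{0,\dots,p_j\}$ of that job's processing time lying before $d_\ell$. Because the numbers are unary, there are only $(n\cdot P)^{O(\nd)}$ such guesses, which is admissible for an $\mathsf{XP}$-algorithm. Fixing a guess pins down the occupied interval of every spanning job, so the remaining timeline decomposes into $\nd$ pairwise-disjoint \emph{free intervals}, one inside each slab $[d_{\ell-1},d_\ell]$, each a single interval of the form $[\,d_{\ell-1}+o_b^{\ell-1},\,d_\ell-o_a^{\ell}\,]$.

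Within this skeleton, every remaining early job must be placed entirely inside one free interval, must respect its release date, and — since its completion time is then at most that interval's right endpoint — is early precisely when it is assigned to an interval whose right endpoint does not exceed its due date. Selecting and placing these jobs to maximise weight is a multiple-knapsack-type problem with $\nd$ knapsacks, one per free interval, where feasibility \emph{inside a single interval} is exactly the staircase release-date condition of \cref{lem:feasibilitySingleDeadline} (with each release date clamped to the interval's left endpoint, which plays the role of the common start, and the right endpoint playing the role of the common due date). I would solve it by a dynamic program that processes the jobs in order of non-increasing release date and whose state is the vector $(c_1,\dots,c_{\nd})$ of processing time already committed to each free interval; each transition either discards the current job or assigns it to some free interval whose right endpoint is at most its due date, and the condition of \cref{lem:feasibilitySingleDeadline} is verified incrementally as $c_i \le (\text{right endpoint of interval } i) - r_j$ as the release thresholds are crossed. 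Since each coordinate is bounded by $P$, there are $P^{O(\nd)}$ states, so the inner computation runs in pseudo-polynomial time; this is exactly the generalisation of the folklore {\sc Knapsack} dynamic program to $\nd$ simultaneous deadlines. Taking the best value over all guesses yields the optimum, and the total running time is $(n\cdot P)^{O(\nd)}$, i.e.\ $\mathsf{XP}$ in $\nd$ under unary encoding.

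For correctness, soundness is immediate: each branch reconstructs an explicit feasible schedule, namely the guessed spanning jobs in their fixed slots together with the remaining jobs placed inside each free interval by \cref{lem:feasibilitySingleDeadline}. For completeness I would take an optimal schedule, read off its spanning jobs and their splits $o_a^\ell$ as one of the enumerated guesses, assign each remaining early job to the free interval in which the optimal schedule runs it, and observe that the jobs placed in each interval already satisfy the condition of \cref{lem:feasibilitySingleDeadline}; hence the inner (optimal) dynamic program recovers at least the optimal weight in that branch. The main obstacle is precisely this non-preemptive blocking: an optimal schedule need \emph{not} process its early jobs in EDD order, and a long job may be forced to straddle several due dates, so one cannot simply pack each slab independently. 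Taming this is exactly what the bound of one spanning job per due date buys us — with only $\nd$ due dates there are $O(\nd)$ spanning jobs to guess, after which the residual problem splits into independent single-deadline instances handled by \cref{lem:feasibilitySingleDeadline}.
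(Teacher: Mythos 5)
Your proof is correct, and it is in essence the paper's own argument viewed through the mirror of \cref{obs:releaseduedate}: the paper parameterizes by $\nr$, guesses the overlap structure at the release dates, and runs a knapsack-style dynamic program over jobs in earliest-due-date order whose state is the vector of processing time consumed in each of the $\nr$ blocks; you parameterize by $\nd$, guess the jobs straddling the due dates, and run the time-reversed version of the same DP (jobs in non-increasing release-date order, state vector of load per free interval). The genuine differences are two. First, your guessing step enumerates the \emph{identity} of each spanning job together with its split, costing $(n\cdot\maxp)^{O(\nd)}$ branches, whereas the paper only guesses the completion time of the job overlapping each release date, costing $\maxp^{\nr}$ branches; yours is more expensive but cleaner, since the spanning jobs are then explicitly removed from the pool, and it remains comfortably within $\mathsf{XP}$ under unary encoding. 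Second, you certify feasibility inside each free interval via the staircase condition of \cref{lem:feasibilitySingleDeadline} (which the paper invokes only in its MILP argument of \cref{sec:pwd}), while the paper uses an adjacent-swap argument to justify EDD order within a block; under the time reversal these are the same fact, and your incremental check $c_i + p_j \le R_i - \max\{r_j, L_i\}$ in non-increasing release order is exactly the paper's check $r_\ell + t_\ell \le d_j$ in EDD order. The only things you should make explicit are the per-guess validity checks --- that a guessed spanning job starts no earlier than its release date, completes no later than its own due date, and that consecutive due dates claimed by the same job yield consistent occupied intervals --- but these are routine and do not affect correctness.
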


By \cref{obs:releaseduedate}, we can focus on~$\nr$ as a parameter, and the result for $\nd$ as a parameter follows.
We begin by ordering the release dates and denote with $r_\ell$ and $r_k$ the $\ell^{\textnormal{th}}$ and $k^{\textnormal{th}}$ release date, respectively, and we use $r_i$ and $r_j$ to denote the release date of jobs $i$ and $j$, respectively (same with due dates). 
We also use~$\leq_d$ to denote a fixed order of the jobs such that their due dates are non-descending.
We encode this directly into the indexing of the jobs, that is, $i\leq_d j$ if and only if $i \leq j$. 

We can now assume that there exists some optimal schedule such that at every release date $r_\ell$ there is a job scheduled to start exactly at $r_\ell$. 
This can be ensured by enumerating the possible overlap structures of the optimal schedule with respect to the release dates, as in \cref{sec:pwd}.
To do this, we need to guess for each release date $r_\ell$ if there is a job scheduled there and what the completion time of that job is.
We know the completion time to be in $[r_\ell+1,r_\ell+p_{\text{max}}]$, so in reality we will need to solve $(p_{\text{max}})^{\nr}$ dynamic programs and return the best solution found.
This is only a pseudo-polynomial time overhead, so we will suppress it in the following.

Notice that between two consecutive release dates, we can now assume the scheduled early jobs to be ordered according to $\leq_d$, that is, by the earliest due date (all tardy jobs are scheduled later on in a feasible but arbitrary way).
If they are not ordered as such two adjacent out-of-order jobs can be swapped while maintaining feasibility. 

This structural simplification allows us to write a dynamic program with the following recursive definition of the dynamic programming table:
\[
  T[j,t_1,\dots,t_{\nr}] = \max_{\substack{\text{$\ell$ with $r_\ell\geq r_j$}\\  \text{and $r_\ell  + t_\ell\leq d_j$}}}\{ T[j-1,t_1,\dots,t_{\nr}], T[j-1,t_1,,\dots,t_{\ell}-p_j,\dots,t_{\nr}] + w_j\}.
\]

The base cases for the recursion are:
\begin{itemize}
  \item $T[0,0,\dots,0] = 0$,
  \item $T[0,t_1,\dots,t_{\nr}] = -\infty$ if any of the $t_\ell$ are not zero,
  \item $T[j,\cdot,t_\ell,\cdot] = -\infty$ if $t_\ell > r_{\ell+1} - r_\ell$ or $t_\ell<0$ for some $\ell$.
\end{itemize}

The entry $T[j,t_1,\dots,t_{\nr}]$ intuitively represents the most valuable schedule attainable using only the first $j$ jobs according to $\leq_d$, and with a total scheduled job time of $t_\ell$ starting at $r_\ell$. 

We will first notice that the DP-table $T$ has at most $n\cdot (n\cdot p_{\text{max}})^{\nr}$ finite entries. 
Furthermore, each of those entries can be computed in time $O(\nr)$ if we process them in increasing order of the first index.

It remains to show that there exists a schedule attaining the value $\max_{t_1,\dots,t_{\nr}} (T[|J|,t_1,\dots,t_{\nr}])$, as well as that there is some cell $T[|J|,t_1,\dots,t_{\nr}]$ that has value at least as high as that of an optimal schedule.

\begin{lemma}
\label{lem:DPrealizesOPT}
  If the \Prob instance admits a schedule where the weighted number of early jobs is~$W$, then there exists some entry of the dynamic programming table $T$ with value at least $W$.
\end{lemma}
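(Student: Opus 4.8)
The plan is to start from a schedule $\sigma$ realizing value $W$ and to exhibit a single cell of $T$ whose value is at least $W$. First I would bring $\sigma$ into the canonical form justified by the preprocessing in this section: using the guessed overlap structure at the release dates, I may assume that no early job crosses a release date, that within each block $[r_\ell,r_{\ell+1})$ the early jobs appear in non-descending due-date order, and that these jobs are left-compacted so that the first one starts exactly at $r_\ell$ and there are no internal gaps. Compaction only moves completion times earlier, so it preserves earliness, and it is release-date-feasible: every early job that $\sigma$ places in block $\ell$ has release date equal to some $r_m\in\{r_1,\dots,r_{\nr}\}$, and from $\sigma(j)\ge r_m$ together with $\sigma(j)<r_{\ell+1}$ we get $m\le\ell$, hence $r_\ell\ge r_m=r_j$. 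Because no early job of block $\ell$ crosses $r_{\ell+1}$, the whole block fits in $[r_\ell,r_{\ell+1})$, so the total processing load placed there is at most $r_{\ell+1}-r_\ell$ and the corresponding table cell will be finite.

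For each $\ell$ let $t_\ell^*$ be the total processing time of the early jobs that the canonical $\sigma$ places in block $\ell$; I claim $T[n,t_1^*,\dots,t_{\nr}^*]\ge W$. I would establish this through a strengthened statement proved by induction along the order $\leq_d$: writing $t_\ell^{(j)}$ for the total processing time of the early jobs among $\{1,\dots,j\}$ placed by $\sigma$ into block $\ell$, the entry $T[j,t_1^{(j)},\dots,t_{\nr}^{(j)}]$ is at least the total weight of the early jobs among $\{1,\dots,j\}$. The base case $j=0$ is exactly the base case $T[0,0,\dots,0]=0$ of the recursion.

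For the inductive step I would distinguish whether job $j$ is early in $\sigma$. If $j$ is tardy, the load vector is unchanged and the ``do not schedule $j$'' branch $T[j-1,t_1^{(j-1)},\dots,t_{\nr}^{(j-1)}]$ of the recursion already yields the bound. If $j$ is early and $\sigma$ places it in block $\ell$, then since jobs are processed in $\leq_d$ order and $\sigma$ orders each block by due date, job $j$ is appended at the running end of block $\ell$, so $t_\ell^{(j)}=t_\ell^{(j-1)}+p_j$ and the completion time of $j$ in the compacted schedule equals $r_\ell+t_\ell^{(j)}$. I would then check that both guards of the recursion hold for this $\ell$: the release guard $r_\ell\ge r_j$ is the observation from the canonical form above, and the due-date guard $r_\ell+t_\ell^{(j)}\le d_j$ is precisely the statement that $j$ is early. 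Hence the branch $T[j-1,\dots,t_\ell^{(j)}-p_j,\dots]+w_j=T[j-1,t_1^{(j-1)},\dots,t_{\nr}^{(j-1)}]+w_j$ is admissible in the maximization, and the induction hypothesis applied to $j-1$ gives the claim.

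The main obstacle I anticipate is the bookkeeping that makes the DP's incremental ``append $j$ at the running end of block $\ell$'' coincide with $\sigma$'s actual placement of $j$: concretely, that the running load $t_\ell^{(j-1)}$ at the moment $j$ is processed equals the completion offset of $j$ inside block $\ell$. This hinges on the fact that every job $\sigma$ places before $j$ in block $\ell$ has due date at most $d_j$ and is thus processed no later than $j$ under $\leq_d$, while every job placed after $j$ in that block has larger due date (ties resolved consistently by the fixed tie-break of $\leq_d$) and is processed later; once this alignment is in place, the two guards fall out immediately. A secondary point to state cleanly is that the whole argument is carried out inside the single DP instance corresponding to the correct guessed overlap structure and first-job completion times, which is one of the enumerated instances and hence certifies the existential conclusion of the lemma.
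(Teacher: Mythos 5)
Your proposal is correct and follows essentially the same route as the paper's proof: an induction along the $\leq_d$ order showing that $T[j,t_1^{(j)},\dots,t_{\nr}^{(j)}]$ dominates the weight of the early jobs among the first $j$, with the tardy and early cases matching the two branches of the recursion. You are somewhat more explicit than the paper in verifying the two guards of the recursion and in justifying that the left-compacted canonical schedule remains feasible, but this is added rigor rather than a different argument.
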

\begin{proof}
  Let $\sigma$ be an optimal schedule for the instance attaining value $W$. 
  We begin by assuming that $\sigma$ schedules the jobs in an order according to $\leq_d$ between two consecutive release dates.
  Now let us denote with $W_i$ the weighted number of early jobs among the first~$i$ jobs according to $\leq_d$, so:
  \begin{equation*}
    W_i = \sum_{j\mid j \leq i, \sigma(j) + p_j \leq d_j} w_j \enspace .
  \end{equation*}
  Further, we denote by $t_{\ell}^i$ the amount of processing time of the first $i$ items scheduled starting at $r_\ell$ up to $r_{\ell+1}$.
    
  We can now show inductively that for every $i$, we can find a cell of the dynamic programming table representing the partial schedule of the first $i$ items, that is, 
  \begin{equation*}
    T[i,t^i_{1}, \dots,t^i_{\nr}] \geq W_i \enspace .
  \end{equation*}
  Note that the base case of the induction with $i = 0$ is true by definition.
    
  So we consider some fixed $i$ and assume $T[k,t^k_{1}, \dots,,t^k_{\nr}] \geq W_k$ for all $k < i$.
  Suppose $\sigma$ job $i$ tardy.
  Then we have $W_{i} = W_{i-1}$ and and $t^i_{\ell} = t^{i-1}_{\ell}$ for all $\ell$, and therefore
  \begin{equation*}
    T[i,t^i_{1}, \dots,t^i_{\nr}] \geq T[i-1,t^i_{1}, \dots,,t^i_{\nr}] \geq W_{i-1} = W_i \enspace .
  \end{equation*}
    
  Otherwise, job $i$ was scheduled early, say after release date $r_k$ but before $r_{k+1}$.
  Then we know that $W_i = W_{i-1} + w_i$, $t^i_k = t^{i-1}_k + p_i$, and $t^i_\ell = t^{i-1}_{\ell}$ for all $\ell \not = k$. 
  Consequently, we get that
  \begin{equation*}
    T[i,t^i_{1},\dots, t^{i-1}_k + p_i ,\dots,t^i_{\nr}] \geq T[i-1,t^{i-1}_{1}, \dots,t^{i-1}_{\nr}] + w_i \geq W_{i-1} + w_i = W_i \enspace .
  \end{equation*}

  This concludes the induction. 
  We finally note that because $\sigma$ is a schedule the cell $T[n,t^n_1,\dots,t^n_{\nr}]$ has a finite value, and from the inductive argument that value is at least $W$.
\end{proof}

\begin{lemma}
\label{lem:DPisRealizable}
  Let $T[i,t_1,\dots,t_{\nr}]$ be a cell of the dynamic program with finite value~$W$. 
  Then there is a schedule where at most the first $i$ jobs are early, that attains a weighted number of early jobs $W$, and which schedules a total processing time of $t_\ell$ immediately after release date $r_\ell$.
\end{lemma}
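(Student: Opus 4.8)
The plan is to prove the statement by induction on the first index $i$, reconstructing a schedule by backtracking through the recursion defining $T$. I would strengthen the induction hypothesis to carry two structural invariants in addition to the value and the per-slot totals: that the early jobs in each slot are packed contiguously starting at the release date (so that exactly $t_\ell$ units of processing time immediately follow $r_\ell$), and that within each slot the early jobs appear in non-descending order of due date. These invariants are exactly what will let me insert the next job cleanly.

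For the base case $i=0$, the only finite cell is $T[0,0,\dots,0]=0$, realized by the empty schedule. For the inductive step, consider a finite cell $T[i,t_1,\dots,t_{\nr}]=W$ and inspect which branch of the recursion attains the maximum. If it is the term $T[i-1,t_1,\dots,t_{\nr}]$, then that predecessor cell is finite with value $W$, and the induction hypothesis yields a schedule realizing it. Scheduling job $i$ tardy after all early jobs leaves both the early jobs and the totals $t_1,\dots,t_{\nr}$ untouched, so the same schedule (with job $i$ added tardily) realizes $T[i,\dots]$ with at most the first $i$ jobs early.

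The main case is when the maximum is attained by $T[i-1,t_1,\dots,t_\ell-p_i,\dots,t_{\nr}]+w_i$ for some $\ell$ with $r_\ell\geq r_i$ and $r_\ell+t_\ell\leq d_i$. Since the finite value $W$ is attained through this branch, the predecessor cell is finite and in particular $t_\ell-p_i\geq 0$, so the induction hypothesis applies and gives a schedule $\sigma'$ of value $W-w_i$ whose slot $\ell$ has its early jobs packed into $[r_\ell,\,r_\ell+(t_\ell-p_i)]$. I would then place job $i$ into the free interval $[r_\ell+(t_\ell-p_i),\,r_\ell+t_\ell]$. The condition $r_\ell\geq r_i$ ensures its start is no earlier than $r_i$; the condition $r_\ell+t_\ell\leq d_i$ ensures it completes by $d_i$ and is thus early; and finiteness of the cell forces $t_\ell\leq r_{\ell+1}-r_\ell$ through the base case, so the job fits entirely before the next release date and conflicts with nothing. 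Because job $i$ has the largest due date among the first $i$ jobs under the $\leq_d$ indexing, appending it last in slot $\ell$ preserves the due-date ordering invariant. The resulting schedule has value $(W-w_i)+w_i=W$, has totals $t_1,\dots,t_{\nr}$, and has at most the first $i$ jobs early, closing the induction.

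The main obstacle, though a mild one, is precisely the maintenance of these structural invariants rather than the value alone. The insertion step is only valid because the induction hypothesis guarantees that slot $\ell$ is filled as a contiguous prefix in due-date order, which frees the interval $[r_\ell+(t_\ell-p_i),\,r_\ell+t_\ell]$ and makes job $i$ the correct last element of the slot. Carrying the contiguity and ordering properties through the induction, so that the reconstruction never needs to reshuffle previously placed jobs, is the part requiring care.
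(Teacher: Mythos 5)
Your proposal is correct and follows essentially the same route as the paper: induction on $i$, backtracking through the two branches of the recursion, and inserting an early job $i$ into the free interval $[r_\ell+t_\ell-p_i,\,r_\ell+t_\ell]$, with feasibility guaranteed by the conditions $r_\ell\geq r_i$, $r_\ell+t_\ell\leq d_i$, and $t_\ell\leq r_{\ell+1}-r_\ell$. The contiguity invariant you make explicit is already built into the lemma statement (and hence the induction hypothesis) in the paper, and the due-date-ordering invariant, while harmless, is not actually needed for this direction.
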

\begin{proof}
  Again we prove the lemma by induction, noting that the statement holds for $T[0,0,\dots,0]$ with a schedule that schedules all jobs to be tardy.
  So fix some $i$ and assume the result for all $k < i$.

  If $T[i, t_1, \dots, t_{\nr}] = T[i-1, t_1, \dots, t_{\nr}]$ we use the schedule guaranteed by induction for $T[i-1, t_1, \dots, t_{\nr}]$ and in addition we schedule the $i$-th job to be tardy.
  Otherwise, there is some $k$ such that
  $T[i, t_1, \dots, t_k, \dots, t_{\nr}] = T[i-1, t_1, \dots, t_k-p_i, \dots , t_{\nr}] + w_i$.
  So we take the schedule guaranteed for $T[i-1, t_1, \dots, t_k-p_i, \dots , t_{\nr}]$ and additionally schedule job $i$ at time $r_k + t_k-p_i$.
  From the definition of $T[\cdot]$ we have that $r_i \leq r_k$, so the job may be scheduled at this time. 
  We also get $r_k+t_k \leq d_i$, so the job is early and so the value of this schedule will be $T[i-1, t_1, \dots, t_k-p_i, \dots , t_{\nr}] + w_i$.
  Finally, observe that we have  $r_k + t_k \leq r^{k+1}$, so the inductive hypothesis is maintained and so the $i$-th job does not conflict with any other jobs already scheduled.
  Thus we have constructed a schedule satisfying the induction hypothesis.
\end{proof}

We may now conclude the following:

\begin{corollary}\label{cor:corr}
  Let $W = \max_{t_1,\dots,t_{\nr}}\{ T[n,t_1,\dots,t_{\nr}]\}$.
  Then $W$ is the maximum weighted number of jobs that can be scheduled early in the \Prob instance.
\end{corollary}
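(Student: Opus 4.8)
The plan is to obtain \cref{cor:corr} by combining the two directions already established in \cref{lem:DPrealizesOPT} and \cref{lem:DPisRealizable} into a short sandwich argument. Write $\mathrm{OPT}$ for the maximum weighted number of jobs that can be scheduled early in the given \Prob instance, and let $W = \max_{t_1,\dots,t_{\nr}} T[n,t_1,\dots,t_{\nr}]$ be the best value in the final row of the table. The goal is to show $W = \mathrm{OPT}$ by proving both inequalities.

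For $W \ge \mathrm{OPT}$, I would fix an optimal schedule $\sigma$ attaining weighted early count $\mathrm{OPT}$ and invoke \cref{lem:DPrealizesOPT} on $\sigma$. This yields a cell $T[n,t_1,\dots,t_{\nr}]$ of finite value at least $\mathrm{OPT}$; since $W$ is the maximum over all cells in the row $j=n$, we immediately get $W \ge \mathrm{OPT}$. For the reverse inequality $W \le \mathrm{OPT}$, I would take a maximizing cell $T[n,t_1^\star,\dots,t_{\nr}^\star] = W$, which is finite by choice, and apply \cref{lem:DPisRealizable} to produce an actual schedule of the instance whose weighted number of early jobs equals $W$. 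As $\mathrm{OPT}$ is by definition the largest such value over all feasible schedules, this forces $W \le \mathrm{OPT}$. The two bounds together give $W = \mathrm{OPT}$, which is exactly the claim.

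The only point requiring real care is the interplay with the structural preprocessing, and this is where I expect the subtlety to lie rather than in the numerical inequalities. Both lemmas are stated for a single table $T$ under the standing assumption that some optimal schedule starts a job exactly at each release date; this is enforced by enumerating the $\maxp^{\nr}$ possible completion times of jobs crossing release dates and solving one dynamic program per guess. Strictly speaking, \cref{cor:corr} should therefore be read relative to the best table over this enumeration: \cref{lem:DPrealizesOPT} must be applied to the guess whose overlap structure matches that of $\sigma$, so that the induction in its proof is valid and the between-release-date ordering by $\leq_d$ can be assumed, while \cref{lem:DPisRealizable} guarantees realizability for \emph{every} guess. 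Taking the maximum of $W$ over all enumerated tables preserves both inequalities, and hence the equality $W = \mathrm{OPT}$ continues to hold.

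In summary, the hard part is purely the bookkeeping of ensuring that an optimal schedule's overlap structure is among the enumerated guesses and that the maximization is taken over all of them; once that is granted, the corollary is an immediate consequence of the two lemmas, and I would keep the final proof to just the sandwich step plus a sentence reconciling it with the enumeration.
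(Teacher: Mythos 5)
Your proof is correct and takes essentially the same route as the paper: the paper's proof of \cref{cor:corr} is exactly the two-sided sandwich, citing \cref{lem:DPrealizesOPT} for $W \ge \mathrm{OPT}$ and \cref{lem:DPisRealizable} for $W \le \mathrm{OPT}$. Your extra remark about reconciling the statement with the enumeration over the $\maxp^{\nr}$ overlap structures is a reasonable clarification that the paper itself only handles later, in \cref{lem:correctness}.
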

\begin{proof}
  By \cref{lem:DPrealizesOPT} we have that the maximum weighted number of jobs that can be scheduled early in the \Prob instance is ar least $W$.
  However, let $T[n, t_1,\dots, t_{\nr}]$ be an entry with value~$W$.
  Then \cref{lem:DPisRealizable} guarantees that there exists a schedule for the \Prob instance such that the weighted number of early jobs is $W$.
\end{proof}

Now we have all the pieces to prove a running time upper bounds for our dynamic-programming algorithm for \Prob.

\begin{lemma}\label{lem:correctness}
  The problem \Prob can be solved in time $O(p_{\text{max}}^{\nr} \cdot n\cdot (n\cdot p_{max})^{\nr}\cdot \nr)$. 
\end{lemma}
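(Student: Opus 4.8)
The plan is to prove the running-time bound of \cref{lem:correctness} by combining the correctness result from \cref{cor:corr} with a direct accounting of the cost of filling in the dynamic-programming table, together with the enumeration overhead introduced when we guess the overlap structure with respect to the release dates. First I would recall from the discussion preceding \cref{lem:DPrealizesOPT} that we reduced to a situation where a job is scheduled to start exactly at each release date, at the cost of guessing, for each of the $\nr$ release dates, the completion time of the job placed there. Since each such completion time lies in $[r_\ell+1, r_\ell + \maxp]$, there are at most $\maxp$ choices per release date, giving an outer enumeration factor of $\maxp^{\nr}$; for each such guess we solve one dynamic program and keep the best value found.

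Next I would bound the work of a single dynamic program. As already observed in the text, the table $T$ has at most $n \cdot (n\cdot \maxp)^{\nr}$ finite entries: the first index ranges over the $n+1$ prefixes of jobs (sorted by $\leq_d$), and each of the $\nr$ load coordinates $t_\ell$ ranges over $\{0,1,\dots, n\cdot\maxp\}$, since at most $n$ jobs can be placed after any single release date and each contributes at most $\maxp$ processing time. Entries with $t_\ell$ outside this range are $-\infty$ by the base cases and need not be stored. Each finite entry is computed by the stated recurrence, which takes the maximum over the $\nr$ release dates $r_\ell$ satisfying $r_\ell \ge r_j$ and $r_\ell + t_\ell \le d_j$, comparing $T[j-1,\dots]$ against $T[j-1,\dots,t_\ell - p_j,\dots] + w_j$; this is $O(\nr)$ arithmetic and table-lookup operations per entry, provided we process entries in increasing order of the first index so that all referenced subproblems are already available.

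Multiplying the per-entry cost by the number of entries gives $O(n \cdot (n\cdot\maxp)^{\nr}\cdot \nr)$ time for one dynamic program, and multiplying by the $\maxp^{\nr}$ enumeration factor yields the claimed bound $O(\maxp^{\nr}\cdot n\cdot(n\cdot\maxp)^{\nr}\cdot\nr)$. Correctness of the value returned, namely that $\max_{t_1,\dots,t_{\nr}} T[n,t_1,\dots,t_{\nr}]$ equals the optimum for the correctly guessed overlap structure, follows from \cref{cor:corr}; taking the maximum over all $\maxp^{\nr}$ guesses ensures we hit the structure realized by some optimal schedule, so the overall maximum is the true optimum. I do not anticipate a serious obstacle here, as this is essentially a bookkeeping argument; the only point requiring mild care is justifying that the relevant load coordinates are bounded by $n\cdot\maxp$ rather than by the (potentially much larger) time horizon, which is exactly why the base case $T[j,\cdot,t_\ell,\cdot]=-\infty$ for $t_\ell > r_{\ell+1}-r_\ell$ keeps the table finite and why no useful entry ever exceeds $n\cdot\maxp$ in any coordinate.
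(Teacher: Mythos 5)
Your proposal is correct and follows essentially the same route as the paper: guess one of the at most $\maxp^{\nr}$ overlap structures at the release dates, fill the table of $O(n\cdot(n\cdot\maxp)^{\nr})$ finite entries at $O(\nr)$ cost each, and invoke \cref{cor:corr} for correctness of the returned maximum. The additional bookkeeping you supply (why each load coordinate is bounded by $n\cdot\maxp$ and why processing entries by increasing first index suffices) matches the paper's discussion preceding the lemma.
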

\begin{proof}
  We first need to guess the correct overlap structure between jobs and release dates of which there are at most $p_{\max}^{\nr}$ many.
  For each of these overlap structures, we then compute the dynamic programming in time $O(n\cdot (n\cdot p_{\max})^{\nr}\cdot \nr)$. 
  We finally return the best objective value found in any of the dynamic programming tables. By \cref{cor:corr}, this is the maximum weighted number of jobs that can be scheduled early in the \Prob instance.
\end{proof}

Now we have all the pieces to prove \cref{thm:XP}. 

\begin{proof}[Proof of \cref{thm:XP}]
    \cref{thm:XP} follows directly from \cref{lem:correctness} and \cref{obs:releaseduedate}.
\end{proof}

Notice that the proof of \cref{lem:DPisRealizable} also gives a backtracking procedure which allows us to compute an optimal schedule in time $O(n\cdot \nr)$ for a given filled in dynamic programming table with the maximum entry already computed.

\section{Conclusion}
In this work, we give a comprehensive overview of the parameterized complexity of \Prob for the parameters number $\np$ of processing times, number $\nw$ of weights, number $\nr$ of release dates, and number $\nd$ of due dates. We leave several questions for future research, in particular:
\begin{itemize}
    \item Is \Prob in $\mathsf{XP}$ when parameterized by $\np$?
    \item Is \Prob fixed-parameter tractable when parameterized by $\np + \nw$?
    \item Is \Prob fixed-parameter tractable when parameterized by $\np$ and all numbers are encoded unarily?
\end{itemize}
We remark that a technical report~\cite{ElffersW14} claims (strong) $\mathsf{NP}$-hardness even for $\np=2$ and $\nw = 1$.
If that result holds, then it would also settle the parameterized complexity for the parameter combination $\np+\nw$, and for $\np$ even when all processing times and weights are encoded in unary.

\paragraph{Acknowledgements.} We wish to thank Danny Hermelin and Dvir Shabtay for fruitful discussions that led to some of the results of this work.

\bibliography{bibliography}

\end{document}